\newcommand{\ud}{\mathrm{d}}
\newcommand{\beq}{\begin{equation}}
\newcommand{\eeq}{\end{equation}}
\newcommand{\ba}{\begin{eqnarray}}
\newcommand{\ea}{\end{eqnarray}}
\newcommand\Peq{P_{\text{eq}}}
\newcommand{\R}{\mathbb{R}}
\newcommand{\N}{\mathbb{N}}
\newcommand{\E}{\mathbb{E}}
\newcommand{\F}{\mathcal{F}}
\newcommand{\sL}{\mathrm{L}}
\newcommand{\sH}{\mathrm{H}}
\newcommand{\sU}{\mathrm{U}}
\newcommand{\sV}{\mathrm{V}}
\newcommand{\1}{\mathds{1}} \newcommand{\dd}{\mathrm{d}}
\newcommand{\dds}{\mathrm{d}s}
\newcommand{\ddr}{\mathrm{d}r}
\newcommand{\ddt}{\mathrm{d}t}
\newcommand{\ddx}{\mathrm{d}x}
\newcommand{\Proof}{\begin{proof}}
\newcommand{\EProof}{\end{proof}}
\newtheorem{theorem}{Theorem}[section]
\newtheorem{proposition}{Proposition}[section]
\newtheorem{lemma}{Lemma}[section]
\newtheorem{definition}{Definition}[section]
\title{A New Non-Linear Density Fluctuations Stochastic Partial Differential Equation With a Singular Coefficient of Relevance to Polymer Dynamics and Rheology: Discussions, Proofs of Solution Existence, Uniqueness,  and a Conjecture }
\author{
Ludovic Gouden\`ege$^1$ and Liviu Iulian Palade$^2$ 
\footnote{E-mail: goudenege@math.cnrs.fr;   liviu-iulian.palade@insa-lyon.fr }
}
\numberwithin{equation}{section}
\begin{document}

\maketitle

\begin{flushleft}

$^1$ CNRS, F\'ed\'eration de Math\'ematiques de CentraleSup\'elec\\ B\^at. Bouygues, 9 rue Joliot Curie, 91190 Gif-sur-Yvette, France. 

$^2$ Universit\'e de Lyon, CNRS, Institut Camille Jordan UMR 5208, INSA-Lyon\\ P\^ole de Math\'ematiques,  B\^at. Leonard de Vinci No. 401, 21 avenue Jean Capelle, 69621 Villeurbanne, France.

\end{flushleft}

\begin{abstract}
In this paper we consider an entirely new  -  previously unstudied to the best of our knowledge - type of density fluctuations stochastic partial differential equation with a singular coefficient involving the inverse of a probability density.  The equation was recently introduced by Schieber \cite{jay3} while working on a new  polymer molecular dynamics approach that pertains to the generally called polymer reptation (aka tube) theory.  The corresponding probability density is the solution of an evolution equation (a stochastic transport) defined on a dynamical one-dimensional subspace.  A peculiarity of the here studied equation is its very singular pattern, even though it exhibits a well-posed structure.  As a first step towards furthering the understanding of this new class of equations we prove the existence and uniqueness of solutions under suitable conditions.  Moreover, capitalizing on the assumed gradient structure, we prove that the existing  solutions converge to an equilibrium density.
\end{abstract}

\begin{flushleft}

\textit{Keywords}: stochastic partial differential equation with a singular coefficient; solution existence results. \\
 
\textit{AMS subject classification}: Primary 60H17; Secondary 35B30.


\end{flushleft}


\section{Introduction}\label{intro}

Basically there are two dominant, authoritative theories (together with their variants) in the realm of polymer melt molecular dynamics which are of paramount importance to the rheology of those non-Newtonian fluids.  Specifically, that of Bird, Curtiss, Armstrong and Hassager ~\cite{bird2} based on early ideas set-up by Kirkwood ~\cite{kj1}, and that of Doi and Edwards ~\cite{de1} which incorporates de Gennes' reptation concept ~\cite{dg}.  The former is known as  \textit{kinetical theory for polymer dynamics} while the later is commonly referred to as either the \textit{tube model}  or the \textit{reptation theory}.  In this paper we are concerned uniquely by the second approach, and for sake of clarity we only give a succinct overview of the key ideas and borrow some inspiration from a few of the acclaimed references like e.g. ~\cite{de1}, ~\cite{hui1}, ~\cite{lar}, ~\cite{mks}, ~\cite{faith},  ~\cite{yhl}, to which we equally refer the interested reader for detailed presentations. 

In a melt polymer material sample, the macromolecular chains are highly entangled with each other, hence the motion of a single chain is strongly influenced by the collective presence and dynamics of the surrounding neighbours.  Such an influence is geometrically idealized by considering an individual chain as being confined to a tube with deformable walls - hence the name of the theory.  Because the tube largely hinders the transverse motion, the large scale geometric rearrangement of the chain configuration takes place inside the tube walls via a snake-like reptational motion along its own length.  As the chain is geometrically approximated as a concatenation of deformable segments (one encounters a many body problem), in the de Gennes - Doi - Edwards theory a chain segment is subjected to a three-stage consecutive motion: a sudden deformation immediately followed by a quick process of retraction to its original length, process continued by a slower reptation motion of relaxation out of the original tube.  Actually the reptational diffusion also affects the shape of the both tube extremities in this way: the part of the tube which is emptied as the chain retracts will cease to exist (thus a constraint is released) while a new tube portion is formed at the other extremity and surrounds the diffusing chain.   Next, all these dominant interactions are accounted for in a so-called configurational  probability density diffusion equation - typically a parabolic partial differential equation (which may be deterministic or stochastic in nature).  Its solution constitutes the key ingredient for obtaining the stress tensor which ultimately enters the momentum balance system of equations describing the macroscopic fluid motion.  For some recent modeling results of the tube model in applied rheology see ~\cite{chp1}, ~\cite{chp2}, ~\cite{lip1} and references cited therein. 

The reptation theory (together with the kinetical approach) is popular in the area of polymer rheology.  Since it was published in its original form in 1978-79 (see ~\cite{de1}) it has been subjected  to a multitude of modifications, refinements, improvements to make it meet the needs of better predicting an ever wider variety of rheological experimental data; in turn this however made modeling necessarily more complex.  Such undertakings include - but not restricted to - tube length diameter fluctuations, constrained release, fluctuating slip links and slip springs, inhomogeneities associated with density fluctuations.  Many of these theoretical improvements were tested via advanced Monte Carlo simulations performed on stochastic equations. 

For instance, while focusing on the density fluctuations and their relation to quantities like slip spring distribution (see e.g. ~\cite{jay1}, ~\cite{jay2}), Schieber ~\cite{jay3} has recently authored a density fluctuations stochastic partial  differential equation, \textit{a simplified version of which} is the focus of this work: see the below given problem \eqref{j1}-\eqref{j5}.  All quantities are dimensionless.  

Let $\rho(t,s)$ be a stochastic unknown process, with $s\in \left[L_-,L_+ \right]$ being the space variable (related to a tube length being equal to $2L$) and $t\in[0,T]$ being the time variable.
Let $\mu(t,s)$ be the chemical potential, $W(t,s)$ a Wiener process (possibly a space-time white noise, a time white noise or a finite-dimensional brownian motion).

We are interested in the following formal stochastic partial differential equation
\beq\label{j1}
\partial_{t} \rho = \dfrac{1}{2}\dfrac{\partial}{\partial s}\left(\dfrac{1}{\rho} \dfrac{\partial \mu}{\partial s}\right) \ud t + \dfrac{\partial}{\partial s}\left(\dfrac{\ud W}{\sqrt{\rho}}  \right),\quad s\in \left[L_-,L_+ \right]  
\eeq
where the boundary conditions are stochastic in nature:

\beq\label{j2}
\ud L_{\pm}=-\dfrac{1}{\rho}\left[\dfrac{1}{2\rho}\left(\dfrac{\partial \mu}{\partial s} \right)_{\mathcal{T}} \ud t+\dfrac{\ud W}{\sqrt{\rho}} \right] \Bigg|_{s=L_{\pm}}.
\eeq
As is common in thermodynamics, the $(\cdots)|_{\mathcal{T}}$ notation indicates parameter (temperature) $\mathcal{T}$ is kept constant, we will skip this notation later.

The free energy (Hamiltonian) $\mathcal{E}$ of a generic unknown $\rho$ is given by 
\beq\label{j3}
\mathcal{E}(\rho)
=\int_{[L_{-},L_{+}]}\left[\rho(\cdot, s)+\dfrac{1}{\rho(\cdot, s)} \right] \ud s
=\int_{[L_{-},L_{+}]} V(\rho(\cdot, s)) \ud s,
\eeq
with $V$ the obvious potential. 

The chemical potential $\mu$ is usually given by the functional derivative of the free energy
\beq\label{j4}
\mu=\frac{\delta \mathcal{E}}{\delta\rho}=1-\dfrac{1}{\rho^2} \quad(+C)
\eeq
but the above chemical potential may be defined up to an additive constant $C$ which is used as a Lagrangian multiplier in various models.
However its presence here is impactless for only the gradient of the chemical potential will be used.

Equation \eqref{j1} is compatible with total mass conservation law and is also expected to exhibit an equilibrium probability density (invariant measure) $P_{eq}$ given by 

\beq\label{j5}
\ud \Peq(\rho)=\frac{1}{Z}\exp \left[-\displaystyle \mathcal{E}(\rho) \right]\mathds{1}_{\rho\geq 0} \ud\gamma(\rho)
\eeq
where $\gamma$ is a reference measure in some space, and $Z$ is a normalization constant.

The originally proposed free energy \cite{jay3} contained the extra $ \alpha \left(\dfrac{\partial \rho(s)}{\partial s} \right)^2 $ term in the integrand of equation \eqref{j3} and the extra $-2\alpha \dfrac{\partial^2 \rho}{\partial s^2}$ term in equation \eqref{j4}, with $\alpha\geq 0 $ being a model parameter. 
In the full model, when a gradient term is added to the free energy, the above introduced singular problem pertains to the  Cahn-Hilliard-Cook family of equations (with a singular mobility, singular potential and singular multiplicative noise). 
Specifically:

\beq\label{j3bis}
\mathcal{E}_\alpha(\rho)=\int_{[L_{-},L_{+}]}\left[\rho(\cdot, s)+\dfrac{1}{\rho(\cdot, s)}  + \alpha \left(\dfrac{\partial \rho(\cdot, s)}{\partial s} \right)^2 \right] \ud s,
\eeq

\beq\label{j4bis}
\mu=\frac{\delta \mathcal{E}_\alpha}{\delta\rho}=1-\dfrac{1}{\rho^2}-2\alpha \dfrac{\partial^2 \rho}{\partial s^2} \quad (+C).
\eeq

In this paper, taking \eqref{j1}-\eqref{j5} as starting point, we undertake to study the $n$-dimensional generalized problem by considering a regularized expression of the full problem, using an abstract formulation which encodes all the difficulties in the classical approach of SPDEs.
We detail the expected concept of solution that could remain unaltered in the singular case.
We also explain why a singular version of the 1-dimensional problem is well-posed with the extra-$\alpha$-term in \eqref{j3bis}-\eqref{j4bis} but under the assumption of additive noise.


Our main result is the following existence theorem (with the meaning of the solution concept being detailed later in the text):


\begin{theorem}[{\bf Main Result}]\label{th:main}
Let $\rho_{0}$, defined on a given space domain, be  a smooth enough initial probability density.
\begin{itemize}
\item Given a regular abstract formulation (possibly in the $n$-dimensional space), then, for all $T>0$, there exists a unique solution $(\rho_{t})_{t\in[0,T]}$ to the equation \eqref{j1}.
Moreover, under strict monotonicity, for $T\rightarrow +\infty$, the distribution of the solution converges to an equilibrium density (invariant measure).\label{th:main1}
\item Given a singular abstract formulation in the $1$-dimensional space but with additive noise and the extra-term with $\alpha >0$, then, for all $T>0$, there exists a unique solution $(\rho_{t})_{t\in[0,T]}$ to the equation \eqref{j1}.  Moreover, the distribution of the solution converges to the unique equilibrium density (invariant ergodic measure) given by equation \eqref{j5} with $\gamma$ a reference Gaussian measure.\label{th:main2}
\end{itemize}
\end{theorem}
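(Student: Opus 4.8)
The plan is to place both items within the variational (monotone-operator) framework for SPDEs on a Gelfand triple $V \hookrightarrow H \hookrightarrow V^{*}$, reducing each assertion to the verification of hemicontinuity, (weak) monotonicity, coercivity and growth for the driving operators, and then to read off the long-time behaviour from the built-in gradient / reversible structure, with the free energy $\mathcal{E}$ (respectively $\mathcal{E}_\alpha$) serving as a Lyapunov functional and $\Peq$ as the reversible stationary measure. Throughout, $H=L^{2}$ on the (fixed, after the change of variables absorbing the moving endpoints $L_\pm$ into the abstract formulation) reference domain, and the mass-conservation and boundary constraints are encoded in the choice of $V$.

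For the first item I would first fix the regularized abstract formulation: truncate the singular factor $1/\rho$ by a bounded Lipschitz function $b_\varepsilon$ agreeing with it on $\{\rho\ge\varepsilon\}$, and $1/\sqrt{\rho}$ by $\sigma_\varepsilon$, so that \eqref{j1} reads $\dd\rho_t = A(\rho_t)\,\dd t + B(\rho_t)\,\dd W_t$ with $A(\rho)=\tfrac12\,\partial_s\!\big(b_\varepsilon(\rho)\,\partial_s\mu(\rho)\big)$ a well-defined map $V\to V^{*}$ (with $V=H^{1}$, or the relevant $n$-dimensional Sobolev space). I would then check: (i) hemicontinuity of $A$; (ii) the weak monotonicity inequality $2\,\langle A(u)-A(v),u-v\rangle_{V^{*},V} + \|B(u)-B(v)\|_{\mathrm{HS}}^{2} \le C\|u-v\|_{H}^{2}$, using the Lipschitz bound on $V'$ over the truncated range and boundedness of the mobility; (iii) the coercivity/growth bound $2\,\langle A(u),u\rangle + \|B(u)\|_{\mathrm{HS}}^{2} \le C(1+\|u\|_{H}^{2}) - c\|u\|_{V}^{2}$, which is where the dissipativity of the divergence-form structure is used. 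The Krylov--Rozovskii / Pardoux existence-and-uniqueness theorem then yields the unique variational solution $(\rho_t)_{t\in[0,T]}$ for every $T>0$. Since the regularized $A$ retains the gradient form $A=-K\,\mathcal{D}\mathcal{E}_\varepsilon$ for a positive operator $K$, the measure $\Peq^{\varepsilon}\propto e^{-\mathcal{E}_\varepsilon}\,\dd\gamma$ is reversible and invariant; under the strict monotonicity hypothesis (which is consistent with $V''(\rho)=2/\rho^{3}>0$, i.e. uniform convexity of $\mathcal{E}_\varepsilon$) the synchronous coupling of two solutions gives $\tfrac{\dd}{\dd t}\,\E\|\rho^{1}_t-\rho^{2}_t\|_{H}^{2}\le -2c\,\E\|\rho^{1}_t-\rho^{2}_t\|_{H}^{2}$, hence exponential contraction, uniqueness of the invariant measure, and convergence of $\mathrm{Law}(\rho_t)$ to it in the Wasserstein-$2$ distance by the usual Krylov--Bogoliubov plus contraction argument.

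For the second item the extra term $\alpha(\partial_s\rho)^{2}$ in \eqref{j3bis} turns the problem into a stochastic Cahn--Hilliard--Cook equation in dimension one, with the convex-but-singular potential $V(\rho)=\rho+1/\rho$ and additive (trace-class or space--time) noise. I would decompose $\rho=z+v$, where $z$ is the stationary stochastic convolution associated with the fourth-order operator $\alpha\,\partial_s^{4}$ under the prescribed boundary conditions --- its invariant law being the Gaussian reference measure $\gamma$ of \eqref{j5} --- leaving $v$ to solve a random PDE with the singular nonlinearity. The decisive a priori estimate is positivity: testing the $v$-equation against $\mathcal{D}\mathcal{E}_\alpha(\rho)$ and using that $\int 1/\rho\,\dds$ enters $\mathcal{E}_\alpha$ while $-V'(\rho)=1/\rho^{2}-1\to+\infty$ as $\rho\downarrow 0$, one obtains $\E\big[\sup_{[0,T]}\mathcal{E}_\alpha(\rho_t)\big] + \E\!\int_0^{T}\!\|\rho_t\|_{H^{2}}^{2}\,\dd t \le C(\mathcal{E}_\alpha(\rho_0),T)$; the resulting $\sup$-bound on $\int 1/\rho$, combined with the one-dimensional embedding $H^{1}\hookrightarrow C$, forces $\rho_t\ge\delta>0$ pointwise. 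Carrying this bound through a Galerkin scheme (the same energy identity holds for the approximants, and a stopping-time argument keeps them away from $\rho=0$) gives existence; uniqueness follows from the monotonicity in $H^{-1}$ of the Cahn--Hilliard operator $\rho\mapsto\Delta\mu(\rho)$, which holds because $V$ is convex on $(0,\infty)$. Finally $\Peq\propto e^{-\mathcal{E}(\rho)}\,\1_{\rho\ge0}\,\dd\gamma(\rho)$ is invariant by reversibility of the gradient structure, and it is the unique ergodic invariant measure because $\mathcal{E}_\alpha$ is convex --- so $\Peq$ is log-concave and the dynamics strictly dissipative on $\{\rho>0\}$ --- whence exponential convergence of $\mathrm{Law}(\rho_t)$ to $\Peq$.

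The main obstacle is the singular coefficient $1/\rho$. In the first item it is circumvented by the regularization, so the real content there is the abstract verification together with the contraction estimate; in the second item it is genuine, and the crux is to show that the solution never attains the singular value $\rho=0$ despite the additive noise --- equivalently, to make the positivity a priori estimate uniform along the Galerkin approximation, balancing the destabilizing noise against the blow-up of the restoring force $-V'(\rho)$ near the origin. This is precisely the point at which the extra regularization $\alpha>0$ and the one-dimensional Sobolev embedding $H^{1}\hookrightarrow C$ are indispensable, which is also why the analogous assertion is not claimed for the genuinely singular $n$-dimensional or multiplicative-noise case.
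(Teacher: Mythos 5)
Your treatment of the first item matches the paper's: the paper also works in the Gelfand-triple variational framework of Pr\'ev\^ot--R\"ockner (hemicontinuity, weak monotonicity, coercivity, boundedness, then Theorem 4.3.9 of that reference for the exponential contraction under strict monotonicity), and in fact does less than you propose, since it \emph{assumes} the abstract hypotheses and only discusses the obstructions to verifying them for the true coefficients. The genuine problem is in your second item, at the step you yourself identify as the crux: the claim that the bound $\E\bigl[\sup_{[0,T]}\mathcal{E}_\alpha(\rho_t)\bigr]<\infty$ together with $\sH^1\hookrightarrow\mathcal{C}$ forces $\rho_t\ge\delta>0$ pointwise is false. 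A function such as $\rho(s)=|s-s_0|^{1/2}$ lies in $\sH^1$, touches zero, and still has $\int 1/\rho\,\dds<\infty$; a direct computation shows $\int_0^r (\,\epsilon+C\sqrt{u}\,)^{-1}\dd u$ stays bounded as $\epsilon\downarrow 0$, so the energy $\int(\rho+1/\rho)$ does not blow up as the minimum of an $\sH^1$-bounded density tends to zero. Consequently your Galerkin-plus-stopping-time scheme cannot be closed: the energy estimate gives no uniform distance from the singular set, and with additive noise the approximants do reach zero with positive probability.

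The paper circumvents exactly this in two ways that your proposal is missing. First, the identity $M(\rho)\nabla\mu=\rho^{-1}\nabla(-\rho^{-2})=2\rho^{-4}\nabla\rho=-\tfrac23\nabla(\rho^{-3})$ (Proposition \ref{prop:equiv}) converts the singular mobility into a \emph{constant} mobility with the stronger drift singularity $-1/\rho^{3}$. Second, it invokes the reflection-measure machinery of \cite{goudenege,MR2349572}: for a drift singularity $1/\rho^{p}$ the solution is constructed together with a reflection measure supported on the contact set $\{\rho=0\}$, and that measure vanishes precisely when $p\ge 3$. For the untransformed potential $V(\rho)=\rho+1/\rho$, i.e.\ $p=2$, the reflection measure does \emph{not} vanish and must be kept in the equation --- which is why Theorem \ref{th:existencesimplified} explicitly retains a reflection term in that case. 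So the positivity you need is not an a priori estimate but a theorem about the threshold exponent $3$, and your argument only becomes correct after the $\mu=Cst-1/\rho^{3}$ reformulation, which you do not perform. Your uniqueness-by-monotonicity in $\sH^{-1}$ and the Gibbs/Gaussian identification of $\Peq$ do agree with Sections \ref{sec:uniqueness} and \ref{sec:invariant}, though the paper only establishes a Strong Feller estimate and uniqueness of the ergodic measure, not the exponential rate you assert for the singular case.
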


This paper is organized as following: Section \ref{sec:ppmr} introduces the concept of solutions based on the general theory of SPDEs.
We prove that some a priori estimates are valid for the introduced solutions.
It also contains the problem under scrutiny together with the appropriate physical considerations. 
In Section \ref{sec:regular-abstract-formulation}, 
we introduce a regular abstract formulation to present the essential difficulties which prevent us of using classical techniques about the existence of solutions.
In Section \ref{sec:simplified-formulation}, where the main result Theorem \ref{th:existencesimplified} is proved, we introduce a simplified formulation by circumventing selected difficulties which were presented in the previous section.
However, we can keep a singular potential which constitutes the core of the physical model.  Next, the path-wise uniqueness is given in Section \eqref{sec:uniqueness} for all the previous here considered variants of the original model.   The existence and convergence of invariant measures are presented in Section \eqref{sec:invariant}.
We eventually draw conclusions for this paper model, introduce a conjecture about the existence and uniqueness of solutions for a general problem in Section \eqref{concl}.

\section{Problem Presentation and Main Results}\label{sec:ppmr}

At first sight, this equation could be seen as a Cahn-Hilliard-Cook stochastic partial differential equation, and in simple cases, the existence and uniqueness could be obtained by known results.
The Cahn-Hilliard-Cook equation is a model that describes phase separation in a binary alloy (see \cite{Cahn}, \cite{CH1} and \cite{CH2}) in the presence of thermal fluctuations (see \cite{COOK} and \cite{LANGER}). It generally reads:
\beq
\label{e0.1}
\left\{\begin{array}{ll}
\partial_{t} \rho = \frac{1}{2}\nabla \cdot \left(M(\rho) \nabla \mu\right)+\dot{\xi}&\text{ on } \mathcal{D}\subset \R^n,\\
\\
\mu = V'(\rho) -2\alpha\Delta \rho,&\text{ on } \mathcal{D}\subset \R^n,\\
\\
\nabla \rho \cdot \nu = 0 = \nabla \mu \cdot \nu, &\text{ on } \partial\mathcal{D},\\ 
\end{array}\right.
\eeq
where $t$ denotes the time variable, $\alpha$ is a small non-negative parameter, $\nu$ is the outward normal on the boundary $\partial\mathcal{D}$ of the space domain $\mathcal{D}$, $\nabla \cdot$ and $\nabla$ are respectively the divergence and gradient operators in the space variable $s\in \mathcal{D}$, and $\Delta$ is the Laplace operator.
The variable $\rho$ represents the ratio between the two species and the noise term $\dot{\xi}$ accounts for the thermal fluctuations.
For this kind of equation, a key point is that the average $\int_\mathcal{D} \rho\ \dds$ is a conserved quantity.

In the Cahn-Hilliard model, the nonlinear term $V'$ is based on a double-well shaped 
potential $V$ (hence non-convex), and the function $M$ represents the mobility (a diffusion velocity).
Usually the mobility is a constant or the function $M:\rho \mapsto (1-\rho^2)$.
In the deterministic case, the equation is obtained by taking the gradient of the free energy $\mathcal{E}$ in $\mathrm{H}^{-1}(\mathcal{D})$ but on a fixed domain $\mathcal{D}$.

Since the gradient operates in $\mathrm{H}^{-1}(\mathcal{D})$, a fourth order equation is obtained.
Thus, if $\alpha>0$, the presented model is of Cahn-Hilliard-Cook type, but with moving domain, singular mobility and multiplicative noise.
Indeed, in this new model, the mobility
\begin{equation}\label{eq.M}
M : \rho \mapsto 1/\rho,
\end{equation}
is singular leading also to a singular term in front of the noise. Indeed, by fluctuation-dissipation theorem, we expect the square root of the mobility in front of the noise (see Equation \eqref{j1}).

This new model will lead to multiple mathematical difficulties, in particular concerning the multiplicative noise, which could be a hard question.
All of these difficulties will not be solved in this paper, but we will give hints for possible direction of improvements. The aim of this article is to give the first mathematical structure as a starting point for future research in {\em tube model} and {\em reptation theory}.

\medskip

The deterministic Cahn-Hilliard equation where $V$ is a polynomial function has first been studied in \cite{CH1}, \cite{LANGER} and \cite{MR763473}.
The drawback of a polynomial nonlinearity is that the solution is not constrained to remain in the physically relevant interval. Singular nonlinear terms such as the logarithmic or negative power nonlinearity do remedy this problem. 
Such non smooth functions $V$ have also been considered (see \cite{MR1123143} and \cite{MR1327930}).
This stochastic Cahn-Hilliard-Cook equation has also been first studied in the case of a polynomial nonlinearity: see \cite{BLMAWA1}, ~\cite{BLMAWA2},  ~\cite{MR1867082}, ~\cite{MR1897915}, ~\cite{MR1359472}, ~\cite{MR1111627}.
Again, with a polynomial nonlinear term the solutions do not in general remain in the physical space of interest.
It actually gets even worse for in the presence of noise any solution leaves this interval immediately with positive probability.
Of notice, however, in the work of Debussche and Gouden\`ege ~\cite{debgou, goudenege} and that of Debussche and Zambotti ~\cite{MR2349572}, the solution is forced to stay in the physical interval thanks to a singular potential.\medskip

In the case considered here, the parameter $\alpha$ may vanish, the mobility is highly singular, the stochastic term is a singular multiplicative noise, and the potential $V$ is strictly convex, thus the equation is not of Cahn-Hilliard-Cook type, and we will need a new abstract formulation to treat this new non-linear density fluctuations equation.
On the contrary, in the case of additive noise assumption, with the extra-$\alpha$-term and regular mobility, we can consider a potential $V(\rho)=\rho+1/\rho$ in the energy $\mathcal{E}_\alpha$ since it is related to the model described in \cite{goudenege}.

It would be preposterous to assume studying such a model is a classical undertaking.
Henceforth conspicuously obtaining a preliminary result with an abstract formulation or simplified term potential is all the more necessary and useful. 
We now refer the reader to the classical work of Da Prato and Zabczyk \cite{MR1207136} for notations, basic notions and concepts.

We describe the concept of solution for a regular potential $V$ (at least of class $\mathcal{C}^2$ and convex) and a regular mobility $M$ (at least of class $\mathcal{C}^1$ and non-negative).  Also observe the concept of solution (valid for smooth potential) may remain unaltered in the singular case, but integrability conditions should be changed to give a clear meaning to all involved terms.
In Section \ref{sec:regular-abstract-formulation}, we will simplify some assumptions and keep others unchanged.
In Section \ref{sec:simplified-formulation}, we will also simplify some aspects and keep the others unchanged, and the integrability conditions will be detailed.

\begin{definition}\label{def:sol}
Let $\ell_{-}$ and $\ell_{+}$ with $\ell_{-}< \ell_{+}$ be fixed. Let $\rho_{0} \in \mathcal{C}([\ell_{-},\ell_{+}];\R_{+}^{*})$ be the initial density in the space of probability measures $\mathcal{P}$.
We say that $\left(L_{-}(t), L_{+}(t), \rho(t), W_t\right)_{t\in[0,T]}$, defined on a filtered complete probability space $\left(\Omega, \mathbb{P}, \F, (\F_t)_{t\in [0,T]}\right)$, is a weak solution to \eqref{j1} on $[0,T]$ for the initial condition $\rho_{0}$ if  we have
\begin{enumerate}
\item[(a)] a.s. $L_{\pm} \in \mathcal{C}\left([0,T];\R\right)$ with $L_{-}(t)<L_{+}(t)$ for all $t\in[0,T]$ and $L_{\pm}(0) = \ell_{\pm}$,
\item[(b)] $\rho \in \sL^2\left(\Omega; \mathcal{C}\left([0,T];\mathcal{C}\left([L_{-},L_{+}]; \R_{+}^{*}\right)\right)\cap  \sL^2\left([0,T];\mathrm{H}^1\left([L_{-},L_{+}]; \R_{+}^{*}\right)\right)\right)$ and $\rho(0) = \rho_{0}$,
\item[(c)] $\mu = V'(\rho) \in \sL^2\left(\Omega; \mathcal{C}\left([0,T];\mathcal{C}^1\left([L_{-},L_{+}]; \R\right)\right)\right)$ and $\mu(0) = V' (\rho_0)$,
\item[(d)] $W$ is a continuous version of Brownian motion,
\item[(e)] the process $\left(\rho(t,\cdot),W_t\right)$ is $(\F_t)$-adapted,
\item[(f)] the mobility $M$ is such that $M(\rho) \in L^2\left(\Omega; \sL^1\left([0,T];\sL^1((L_{-},L_{+}); \R_+)\right)\right)$,
\item[(g)] the operator $Q$ is such that $Q(\rho) \in \sL^2\left(\Omega;\sL^2\left([0,T];\sL^2((L_{-},L_{+}); \R_+)\right)\right)$, also we assume that $\mathrm{Tr}(Q^*\Delta Q)$ is controlled by $C_{Q,1}+C_{Q,2}\|\rho\|^2$.
\item[(h)] for all $h \in 
\mathcal{C}^4(\R;\R)
$ and for all $0 \leq  t \leq T$ :
\begin{eqnarray*}
\langle \rho(t,\cdot),h\rangle &=& \langle \rho(0,\cdot),h\rangle - \frac{1}{2}\int_{0}^t \left\langle M(\rho(r,\cdot))\nabla \mu, \nabla h\right\rangle \ddr -  \int_{0}^t\left\langle \nabla h,Q(\rho(r,\cdot))\dd W_r\right\rangle\\
&& - \alpha \int_{0}^t \left\langle \rho(r,\cdot), \Delta^2 h\right\rangle \ddr.
\end{eqnarray*}
\end{enumerate}
\end{definition}\par\medskip

We point out that the continuity of $L_{\pm}$ allows to consider functions on compact sets which simplifies many technical points, and the scalar product are defined on ${[L_{-},L_{+}]}$ assuming the quantity vanishes outside this domain.
Since $\rho>0$, then its infimum is strictly positive, the chemical potential $\mu$ is well defined even if $V$ has singularities at point $0$, and $M(\rho)$ is $\sL^{1}$ for the same reason.
In the case of a Wiener process in abstract space $U$ at point $(d)$, the condition $(g)$ could be set as: $Q(\rho)$ is Hilbert-Schmidt from $U$ to the Hilbert space $\sH=\sL^2(\mathcal{D})$ with maximal domain $\mathcal{D}$ and inner product $\langle\cdot ,\cdot \rangle$.
Actually in Definition \eqref{def:sol} we provided details on all the needed properties to make sure all terms in the last point (h) clearly make sense.

We shall now, step by step, proceed to proving the afore announced main result.
Before engaging in this undertaking, we pause for several short clarifying comments.
On one hand, it is worthy stating the proof may not be particularly challenging if one considers the equation as being a system gradient in infinite dimension, by using the Da Prato and Zabczyk~\cite{MR1207136} framework and by additionally assuming the noise to be an additive space-time Wiener process.
In this case there will be a unique invariant measure, given by a Gibbs-like formula, which is compatible with the expected equilibrium probability density in equation \eqref{j5}.
On the other hand now, there are multiple obstacles to be dealt with as one proceeds along those lines. For instance, one is the gradient structure needing more clarifications.
Another difficulty concerns the moving boundary, since the space domain is not fixed.
This later one seems however not of a hindering nature as the imposed boundary condition has been constructed to cancel supplementary terms coming from integration by parts formula.

In order to explore the concept of solutions, we will detail some a priori estimates and facts based on the existence of smooth solutions.
It gives strong hints that the equation is well-posed and satisfied expected physical properties.

\begin{proposition}\label{th:Proposition}
Consider a solution in the sense of Definition \ref{def:sol} up to time $T>0$,
then the solution lives in $\sL^2(\Omega;\sL^\infty([0,T]; \sL^2([L_-,L_+])))$.
Moreover, with $\alpha=0$, for chemical potential $\mu$ given by \eqref{j4} and mobility $M$ given by equation \eqref{eq.M}, then the energy $\mathcal{E}$ of the solution is well defined for all $t\in [0,T]$.
\end{proposition}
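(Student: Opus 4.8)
The plan is to establish the two assertions separately, disposing of the finiteness of the energy first since it is the more elementary. Work on the almost sure event on which conditions (a) and (b) of Definition~\ref{def:sol} hold. Since $t\mapsto L_{\pm}(t)$ is continuous on $[0,T]$, the set $K:=\{(t,s): t\in[0,T],\ s\in[L_-(t),L_+(t)]\}$ is a compact subset of $\R^2$, and by (b) the function $\rho$ is continuous and strictly positive on $K$; hence $0<\min_K\rho\le\max_K\rho<\infty$. With $M=1/\rho$ from \eqref{eq.M} and $\mu=1-1/\rho^2$ from \eqref{j4}, so that the integrand of $\mathcal{E}$ is $V(\rho)=\rho+1/\rho$, this gives at once
\[
\mathcal{E}(\rho(t))=\int_{L_-(t)}^{L_+(t)}\Big(\rho(t,s)+\frac{1}{\rho(t,s)}\Big)\dds\ \le\ \big(L_+(t)-L_-(t)\big)\Big(\max_K\rho+\frac{1}{\min_K\rho}\Big)<\infty
\]
for every $t\in[0,T]$, and in fact $\sup_{t\in[0,T]}\mathcal{E}(\rho(t))<\infty$ almost surely. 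I would underline that $\alpha=0$ is essential here: the integrand of $\mathcal{E}_\alpha$ also contains $\alpha(\partial_s\rho)^2$, and condition (b) only guarantees $\|\nabla\rho(t)\|^2<\infty$ for almost every $t$ (since $\rho\in\sL^2([0,T];\mathrm{H}^1)$), not for every $t$.

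For the first assertion, $\rho\in\sL^2(\Omega;\sL^\infty([0,T];\sL^2([L_-,L_+])))$, the plan is an It\^o energy estimate for $t\mapsto\tfrac12\|\rho(t)\|^2=\tfrac12\langle\rho(t),\rho(t)\rangle$. Formally one applies the weak formulation (h) with the (time dependent) test function $\rho(t,\cdot)$ and relies on three structural facts. First, the contributions of the moving endpoints $L_{\pm}(t)$ cancel exactly against the boundary terms produced by the integrations by parts --- this is the cancellation the boundary condition \eqref{j2} was designed to enforce --- so the computation proceeds as on a fixed domain. Second, writing $\mu=V'(\rho)$ so that $\nabla\mu=V''(\rho)\nabla\rho$, the mobility term is dissipative, $-\tfrac12\langle M(\rho)\nabla\mu,\nabla\rho\rangle=-\tfrac12\int M(\rho)V''(\rho)|\nabla\rho|^2\dds\le 0$ because $M\ge 0$ and $V$ is convex, and likewise the biharmonic term equals $-\alpha\|\Delta\rho\|^2\le 0$. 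Third, the It\^o correction is $\tfrac12\int_0^t\mathrm{Tr}\big(Q(\rho)^*(-\Delta)Q(\rho)\big)\ddr$, which by (g) is dominated by $\tfrac12\int_0^t\big(C_{Q,1}+C_{Q,2}\|\rho(r)\|^2\big)\ddr$. Collecting these one obtains
\[
\tfrac12\|\rho(t)\|^2\ \le\ \tfrac12\|\rho_0\|^2+\tfrac12\int_0^t\big(C_{Q,1}+C_{Q,2}\|\rho(r)\|^2\big)\ddr-\int_0^t\big\langle\nabla\rho(r),Q(\rho(r))\,\dd W_r\big\rangle .
\]
From here I would follow the standard route: take the supremum over $t\in[0,T]$, then expectations, bound the stochastic integral by the Burkholder--Davis--Gundy inequality (its bracket $\int_0^T\|Q(\rho)^*\nabla\rho\|^2\ddr$ being controlled through (g) and (b), with a Young inequality that absorbs a small multiple of $\E[\sup_{t}\|\rho(t)\|^2]$ into the left-hand side), and conclude by Gronwall's lemma, arriving at $\E[\sup_{t\le T}\|\rho(t)\|_{\sL^2([L_-,L_+])}^2]\le C(\|\rho_0\|_{\sL^2},C_{Q,1},C_{Q,2},T)<\infty$.

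The hard part will not be the algebra but making this It\^o computation legitimate for a merely weak solution. The test functions supplied by (h) are time independent and of class $\mathcal{C}^4$, whereas the energy estimate requires pairing against $\rho(t,\cdot)$; moreover the spatial domain is random and moving, with endpoints $L_\pm(t)$ that are themselves It\^o processes through \eqref{j2}, so a careless application picks up both drift and quadratic-variation boundary terms. Rigorously I would either install a Gelfand-triple It\^o formula ($\mathrm{H}^1\subset\sL^2\subset\mathrm{H}^{-1}$) adapted to the moving boundary, or --- consistently with the ``based on the existence of smooth solutions'' standpoint of this section --- run the computation on smooth approximating solutions and pass to the limit. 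In both approaches the real obstacle is the careful bookkeeping of the boundary terms and the verification that \eqref{j2} makes them cancel exactly, which is also what justifies treating the domain as fixed in the first structural fact above.
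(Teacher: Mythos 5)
Your proposal is correct, and for the first assertion it follows essentially the paper's own route: It\^o's formula for $\tfrac12\|\rho\|^2$, dissipativity of the mobility term via $\nabla\mu=V''(\rho)\nabla\rho$ with $M\ge 0$ and $V$ convex, the trace bound from condition (g) of Definition \ref{def:sol}, and Gronwall. You are in fact slightly more careful than the paper, which takes expectations termwise to kill the martingale, applies Gronwall to bound $\E\|\rho(t)\|^2$ for each fixed $t$, and then asserts the bound on $\E\sup_{t\le T}\|\rho(t)\|^2$; your Burkholder--Davis--Gundy plus Young absorption step is exactly what is needed to make that passage from the pointwise bound to the supremum bound legitimate. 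Where you genuinely diverge is the energy claim. The paper proves it by a second It\^o computation: it applies It\^o's formula to $\mathcal{E}(\rho)$ itself, uses the pathwise lower bound $\rho\ge\varepsilon>0$ (coming from the same continuity-on-a-compact observation you make) to give meaning to the It\^o correction, and bounds $\sup_t\mathcal{E}(\rho(t))$ by $\mathcal{E}(\rho_0)$ plus a martingale term controlled by Doob's inequality plus an $\varepsilon$-dependent drift. Your argument is more elementary: since by (a) and (b) the set $K=\{(t,s): t\in[0,T],\ s\in[L_-(t),L_+(t)]\}$ is compact and $\rho$ is continuous and strictly positive there, $V(\rho)=\rho+1/\rho$ is bounded on $K$ and $\sup_{t\in[0,T]}\mathcal{E}(\rho(t))<\infty$ almost surely, with no stochastic calculus at all. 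For the statement as written (``the energy is well defined for all $t$'') your route is simpler and fully sufficient; what it does not deliver --- and what the paper's It\^o computation aims at, only partially successfully, as the paper itself concedes when discussing the $\varepsilon$-dependence of its constants --- is a quantitative estimate of the energy in terms of $\mathcal{E}(\rho_0)$ that could survive an approximation or limiting procedure in which the uniform positivity is lost. Both treatments share the same formal caveats about applying It\^o's formula to a merely weak solution on a moving domain; you flag these more explicitly than the paper does.
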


\begin{proof}
Let $\alpha=0$, by It\^o's formula, we have
\begin{eqnarray*}
\frac{1}{2} \mathrm{d} \|\rho\|^2 &=& \langle \dd\rho, \rho \rangle\\
&=& -\frac{1}{2}\left\langle M(\rho)\nabla \mu, \nabla \rho \right\rangle\ddt - \left\langle \nabla \rho,Q(\rho)\dd W_t\right\rangle+\frac{1}{2}\mathrm{Tr}( Q^* (-\Delta) Q)\ddt\\
&=& -\frac{1}{2}\left\langle M(\rho)\nabla \rho V''(\rho), \nabla \rho \right\rangle\ddt - \left\langle \nabla \rho,Q(\rho)\dd W_t\right\rangle+\frac{1}{2}\mathrm{Tr}( Q^* (-\Delta) Q)\ddt\\
&\leq& - \left\langle \nabla \rho(t,\cdot),Q(\rho(t,\cdot))\dd W_t\right\rangle+(C_{Q,1}+C_{Q,2}\|\rho\|^2)\ddt
\end{eqnarray*}

using that $V$ is a convex potential and $M$ is a non-negative mobility, and 
 $C_{Q,1}$ and $C_{Q,2}$ are finite constants invoking that the operator $Q$ has enough boundedness properties (almost linear growing, Lipschitz assumption, etc.). Here, since the noise is a one-dimensional Brownian motion, the assumption in point (g) could be simplified.
But remark that the classical assumption about finite trace may be used here to generalize to Hilbert-Schmidt type operators and space-time Wiener processes.
Using that the stochastic term is a martingale with null expectation, by Gronwall's Lemma, we obtain a bound for all $t\in [0,T]$, which leads to a bound for the $\sL^\infty$ norm such that
\[
\mathbb{E}\sup_{t\in[0,T]}\|\rho(t,\cdot)\|^2
\leq 
\mathbb{E}\|\rho(0,\cdot)\|^2
+C_T = \|\rho_0\|^2 + C_T < +\infty
\]
for some positive constant $C_T$ possibly depending on $\rho_0, T, C_{Q,1}$ and $C_{Q,2}$.
At this point it is clear that the initial data $\rho_0$ can be randomized (with bounded second moment) to obtain exactly the same result.
If we have $\alpha>0$, the same computation leads to 
\[
\mathbb{E}\left[\sup_{t\in[0,T]}\|\rho(t,\cdot)\|^2 + 2\alpha \int_0^T \|\Delta \rho(t,\cdot)\|^2 \ddt\right]
\leq 
 \|\rho_0\|^2 + C_T < +\infty
\]
which is eventually a stronger result.

Concerning the energy, with $\alpha=0$, using that almost surely $\rho$ is a continuous non-negative function on compacts spaces, hence bounded for below by a non-negative constant $\varepsilon>0$, the It\^o's terms may be bounded, such that all the terms are well-defined using explicit truncation, and we obtain by It\^o formula
\begin{eqnarray*}
 \mathrm{d} \left(\rho + \frac{1}{\rho}\right) &=& \left\langle \dd\rho, 1-\frac{1}{\rho^2} \right\rangle = \langle \dd\rho, \mu\rangle\\
&=& -\frac{1}{2}\left\langle M(\rho)\nabla \mu, \nabla \mu \right\rangle\ddt - \left\langle \nabla \mu,Q(\rho)\dd W_t\right\rangle+\frac{1}{2}Tr( Q^*\nabla^* H(\rho) \nabla Q)\ddt
\end{eqnarray*}
where $H$ is the Hessian operator of the function $x\mapsto x+1/x$.
Using again that the mobility is non-negative, we obtain the following inequality
\begin{equation}\label{estim}
 \mathrm{d} \left(\rho + \frac{1}{\rho}\right) \leq - \left\langle \nabla \mu,Q(\rho)\dd W_t\right\rangle+C(Q,\rho,\varepsilon) \ddt
\end{equation}
with a random function $C(Q,\rho,\varepsilon)$ which depends quadratically on $Q$ and $\rho$, but certainly on the truncated parameters $\varepsilon$ in unbounded manner. Using Doob's inequality for the martingale term, the assumptions about integrability in Definition \ref{def:sol} will be enough to obtain
\[
\sup_{t\in[0,T]}\mathcal{E}(\rho(t,\cdot)) \leq \mathcal{E}(\rho(0,\cdot)) + \sup_{t\in[0,T]} \int_0^t\left|\left\langle \nabla \mu,Q(\rho)\dd W_t\right\rangle\right|+ \int_0^t \left|C(Q,\varepsilon,\rho)\right| \ddt < +\infty.
\]
\end{proof}

Assuming that we can take the expectation in Equation \eqref{estim} to suppress the martingale term, we obtain a bound in expectation such that
\[
\mathbb{E}\left[\sup_{t\in[0,T]}\mathcal{E}(\rho(t,\cdot))
\right]
\leq 
\mathbb{E}\left[\mathcal{E}(\rho_0)\right] + \mathbb{E}\left[\int_0^T C(Q,\varepsilon,\rho)\ddt\right] 
\]
which is not independent of the truncated parameter $\varepsilon$, thus potentially infinite when passing to the limit or taking expectation.
Remark again that the initial data can be randomized (with bounded inverse first moment, i.e. bounded energy) to obtain the same result.

Finally, in the case $\alpha>0$, the same computation leads to
\begin{eqnarray*}
&\mathbb{E}\left[\sup_{t\in[0,T]}\mathcal{E}(\rho(t,\cdot))
+ \frac12\inf_{t\in[0,T]} |M(\rho(t,\cdot))| \int_0^T \|\nabla \mu\|^2 \ddt
\right]\\
&\leq 
\mathcal{E}(\rho_0) + \mathbb{E}\left[\int_0^T C(Q,\varepsilon,\rho)\ddt\right]  + \alpha \left|\int_0^T \langle \Delta \rho, \Delta \mu\rangle \ddt\right|
\end{eqnarray*}
but the last term is not controlled by integrability condition given in Definition \ref{def:sol}, and it seems unreasonable to expect such controlability.

\newpage

\section{Regular abstract formulation}\label{sec:regular-abstract-formulation}

In this section, we will assume that the equation can be written in a classical abstract form
\[
\dd X_t + A(X) \dd t = B(X) \dd W_t
\]
with operators satisfying assumptions that will be detailed below.  Considering this form, the proof of existence and uniqueness of solution is based on classical theorems the detailed presentations of which are given in \cite{MR1207136}, \cite{GyongyKrylov} or \cite{PR}.
Precisely, we assume to have a Gelfand triple $(\sV,\sH,\sV^*)$, where $\sV$ is a reflexive Banach space and $\sH$ is an Hilbert space, and a space of Hilbert-Schmidt operators $\sL_{HS}(\sU,\sH)$ on some abstract space $\sU$ to define the Wiener process.
We write the operators on these spaces $A:\sV\rightarrow \sV^*$ and $B:\sV\rightarrow \sL_{HS}(\sU,\sH)$. The notation $\,_{\sV^*}\langle \cdot, \cdot \rangle_\sV$ is the canonical notation for the duality bracket between the spaces $\sV^*$ and $\sV$, compatible with the inner product $\langle \cdot, \cdot \rangle$ on the space $\sH$.
We assume that they satisfy the following assumptions:





\begin{itemize}
\item (Hemicontinuity) For all $u,v,w \in \sV$
\[
\lambda \mapsto 2\,_{\sV^*}\langle A(u+\lambda v), w\rangle_{\sV} \text{ is continuous}.
\]
\item (Weak Monotonicity) There exists $c\in\mathbb{R}$ such that for all $u,v\in \sV$
\[
2\,_{\sV^*}\langle A(u)-A(v), v-u\rangle_{\sV} + \|B(u) - B(v)\|_{HS}^2 \leq c \|u-v\|_\sH^2.
\]
\item (Coercivity) There exist $q \in\ ]1,\infty[$, $c_1\in\mathbb{R}$, $c_2\in ]0,\infty[$ and an $\mathcal{F}_t$-adapted process $f\in \sL^1(0,T)$ such that for all $u\in \sV$
\[
2\,_{\sV^*}\langle A(u), u\rangle_{\sV} + \|B(u)\|_{HS}^2 \leq c_1 \|u\|_\sH^2-c_2\|u\|^q_\sV +f(t).
\]
\item (Boundedness)
 There exist $c_3\in [0,\infty[$
 and an $\mathcal{F}_t$-adapted process $g\in \sL^{\frac{q}{q-1}}(0,T)$ such that for all $u\in \sV$
\[
\| A(u)\|_{\sV^*} \leq c_3 \|u\|^{q-1}_\sV +g(t).
\]
\end{itemize}

It is important to notice that many problems have been solved assuming that an abstract formulation actually exists.
Indeed, the moving domain $[L_-,L_+]$ will be certainly encoded in the concept of solution as $X_t$ being $(L_-, L_+, \rho_t)$.
Or, under reformulation, we could assume that the domain is fixed
and given by the initial domain $[\ell_-,\ell_+]$, which solves the point $(a)$.
Moreover, the fiber bundles in points $(b)$ and $(c)$ of Definition \ref{def:sol} are now well-defined as classical spaces.
Finally, the mobility $M$ and the operator $Q$ are included in the operator $A$ and $B$, which will ensure the points $(f)$ and $(g)$. 
Even the concept of chemical potential $\mu$ has been included in the operator $A$.
Within this framework, the concept of solution introduced in \ref{def:sol} will be simplified since we can define a new one. 

One has:

\begin{definition}\label{def:solabstract}
We say that $\left(X(t), W(t)\right)_{t\in[0,T]}$, defined on a filtered complete probability space $\left(\Omega, \mathbb{P}, \F, (\F_t)_{t\in [0,T]}\right)$, is a weak solution to \eqref{j1} on $[0,T]$ for the initial condition $X_{0}$ if  we have
\begin{enumerate}
\item[(b')] $X \in \sL^2\left(\Omega; \mathcal{C}\left([0,T];\sH\right)\cap  \sL^2\left([0,T];\sV\right)\right)$ and $X(0) = X_{0}$,
\item[(d')] $W$ is a continuous version of Brownian motion (or Wiener process),
\item[(e')] the process $\left(X(t),W(t)\right)$ is $(\F_t)$-adapted,
\item[(h')] for all $h \in \sV^*$ and for all $0 \leq  t \leq T$ :
\begin{eqnarray*}
\langle X(t),h\rangle &=& \langle X(0),h\rangle - \int_{0}^t \left\langle A(X(r)), h\right\rangle \ddr + \int_{0}^t\left\langle B(X(r)),h \dd W_r\right\rangle.
\end{eqnarray*}
\end{enumerate}
\end{definition}\par\medskip

\subsection{Existence of solution}

One has the following existence Theorem, standing actually for the first part of Theorem \ref{th:main1}.

\begin{theorem}\label{th:existenceabstract}
Suppose that there exists a regular abstract formulation satisfying classical assumptions. 
Given an initial data in the Banach space $\sV$, then, for all $T>0$, there exists a unique solution $(X_t)_{t\in[0,T]}$ in the sense of Definition \ref{def:solabstract} up to time $T$.
\end{theorem}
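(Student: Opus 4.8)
The plan is to recognize that the four standing hypotheses — hemicontinuity, weak monotonicity, coercivity, and boundedness — are exactly the assumptions of the classical variational existence-and-uniqueness theory for monotone stochastic evolution equations (Krylov--Rozovskii, Pardoux), as presented in \cite{PR} and \cite{GyongyKrylov} (see also \cite{MR1207136}). Thus Theorem~\ref{th:existenceabstract} follows by a direct application of that theory, once one observes that the abstract formulation fits the Gelfand-triple framework with $X_0 \in \sV \subset \sH$. For completeness I would sketch the argument. First I would set up a Faedo--Galerkin approximation: fix a basis $(e_k)_{k\geq 1}$ of $\sH$ with $e_k \in \sV$, let $\Pi_n$ be the $\sH$-orthogonal projection onto $\sH_n = \mathrm{span}(e_1,\dots,e_n)$, and solve the finite-dimensional Itô SDE
\[
\dd X^n_t + \Pi_n A(X^n_t)\,\dd t = \Pi_n B(X^n_t)\,\dd W_t, \qquad X^n_0 = \Pi_n X_0 .
\]
On $\sH_n$ all three norms are equivalent, so hemicontinuity and weak monotonicity make the coefficients continuous, of at most linear growth and one-sided Lipschitz; hence this SDE has a unique global strong solution $X^n \in \mathcal{C}([0,T];\sH_n)$ (a localization/stopping-time argument upgrading local to global existence via the coercivity bound).

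Next I would derive the uniform a priori estimates. Applying Itô's formula to $\|X^n_t\|_\sH^2$, invoking the coercivity assumption with exponent $q$ and the $\sL^1$-process $f$, controlling the martingale term by Burkholder--Davis--Gundy, and closing with Gronwall's lemma gives, uniformly in $n$,
\[
\mathbb{E}\Big[\sup_{t\in[0,T]}\|X^n_t\|_\sH^2\Big] + \mathbb{E}\int_0^T \|X^n_t\|_\sV^q\,\dd t \leq C\big(\mathbb{E}\|X_0\|_\sH^2,\ \|f\|_{\sL^1(0,T)},\ T\big).
\]
The boundedness assumption then bounds $A(X^n)$ uniformly in $\sL^{q/(q-1)}([0,T]\times\Omega;\sV^*)$, and the coercivity/growth control on $B$ bounds $B(X^n)$ uniformly in $\sL^2([0,T]\times\Omega;\sL_{HS}(\sU,\sH))$.

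The passage to the limit is the main obstacle. By reflexivity of $\sV$ and weak compactness, one extracts a subsequence along which $X^n \rightharpoonup X$ in $\sL^q([0,T]\times\Omega;\sV)$, $A(X^n) \rightharpoonup Y$ in $\sL^{q/(q-1)}([0,T]\times\Omega;\sV^*)$, $B(X^n) \rightharpoonup Z$ in $\sL^2([0,T]\times\Omega;\sL_{HS}(\sU,\sH))$, and $X^n_T \rightharpoonup \bar\xi$ in $\sL^2(\Omega;\sH)$; passing to the limit in the Galerkin identities (the stochastic-integral map being weakly continuous) shows that $X$ solves the affine equation $\dd X_t = -Y_t\,\dd t + Z_t\,\dd W_t$ with $X(0)=X_0$. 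It remains to identify $Y = A(X)$ and $Z = B(X)$, which is the classical Minty--Browder monotonicity trick: combining the energy identity for $X$ with the limit of the energy identities for $X^n$, the weak lower semicontinuity of the $\sH$-norm, and the weak monotonicity inequality tested against an arbitrary $\Phi \in \sL^q([0,T]\times\Omega;\sV)$, one obtains
\[
\mathbb{E}\int_0^T \mathrm{e}^{-ct}\ {}_{\sV^*}\!\big\langle Y_t - A(\Phi_t),\ X_t - \Phi_t\big\rangle_{\sV}\,\dd t \ \leq\ 0 ,
\]
and along the way $Z = B(X)$. Choosing $\Phi = X - \lambda \Psi$ with $\lambda > 0$, dividing by $\lambda$, letting $\lambda \downarrow 0$, and using hemicontinuity forces $Y = A(X)$. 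Continuity of $t \mapsto X_t$ in $\sH$ and $X(0)=X_0$ then follow from the Itô formula in the Gelfand triple for variational solutions.

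Finally, uniqueness (and pathwise uniqueness) is the least delicate point: for two solutions $X^1, X^2$ with the same initial datum, the Itô formula in the Gelfand triple applied to $\|X^1_t - X^2_t\|_\sH^2$ together with the weak monotonicity assumption yields
\[
\mathbb{E}\|X^1_t - X^2_t\|_\sH^2 \ \leq\ c\int_0^t \mathbb{E}\|X^1_r - X^2_r\|_\sH^2\,\dd r ,
\]
so Gronwall's lemma gives $X^1 = X^2$; the pathwise statement follows by the same computation localized by a stopping time before taking expectations. The genuinely hard step remains the identification of the weak limit of the nonlinear term $A(X^n)$ via the monotonicity argument, which is exactly where all four hypotheses are used simultaneously.
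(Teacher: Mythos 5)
Your proposal is correct, and at the top level it is the same route as the paper's: both reduce Theorem~\ref{th:existenceabstract} to the classical variational theory of monotone stochastic evolution equations in a Gelfand triple, for which the four standing hypotheses (hemicontinuity, weak monotonicity, coercivity, boundedness) are precisely the required assumptions, citing \cite{PR}, \cite{GyongyKrylov} and \cite{MR1207136}. The difference is one of emphasis rather than of substance. You unpack the cited theorem in full: Galerkin approximation, uniform a priori estimates from It\^o plus coercivity, Burkholder--Davis--Gundy and Gronwall, weak compactness in the reflexive spaces, identification of the nonlinear limit by the Minty--Browder trick with the exponential weight $\mathrm{e}^{-ct}$ and hemicontinuity, and uniqueness from weak monotonicity. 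That is an accurate sketch of the proof of the relevant theorem in \cite{PR} (your only loose point is asserting that hemicontinuity plus monotonicity gives \emph{continuity} of the finite-dimensional coefficients; strictly one gets demicontinuity, which in finite dimensions suffices for the existence results for monotone coercive SDEs that the Galerkin step actually invokes). The paper, by contrast, explicitly declines to reprove the classical theorem (``we have supposed that the classical assumptions are satisfied, thus we do not have to verify them'') and instead devotes its proof to computing what the four hypotheses amount to for the concrete operators $A: X\mapsto \frac12\langle M(X)V''(X)\nabla X,\cdot\rangle$ and $B:X\mapsto (w\mapsto \langle\cdot,\nabla^*Q(X)w\rangle)$, pointing out where each bound would fail for the singular mobility and potential. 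Neither of you needs the other's material for the statement as formulated, since the hypotheses are assumed rather than verified; your version buys a self-contained existence argument, while the paper's buys the model-specific diagnosis that motivates the simplified formulation of Section~\ref{sec:simplified-formulation}.
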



\begin{proof}
We have supposed that the classical assumptions are satisfied, thus we do not have to verify them. However, we will show how they may be obtained with assumptions on mobility $M$, potential $V$ and operator $Q$, or on regularized versions of these objects. 

Usually, to prove the existence of a solution of a singular equation, we need to prove that a regularized version of the equation is well-posed with estimates being independent of the regularization procedure.
The possible regularized version may be some finite dimensional approximation (like Galerkin), a truncation of the non-linearity, a Yosida regularization, or a numerical approximation.
Thus, all future approaches will have to use the following calculations and to prove that the estimates are independent of the regularization procedure.

In Theorem \ref{th:existenceabstract}, we proved that some solution could exist in a fixed domain and that the It\^o's formula (being used for a priori estimates) suggests that the operator $A$ has the form $A : X\mapsto \frac12\langle M(X)V''(X)\nabla X, \cdot \rangle$ plus extra additive terms involving Laplace operator when $\alpha\neq 0$. It also suggests that the operator $B$ has the form $B:X\mapsto (w \mapsto \langle \cdot ,\nabla^* Q(X) w \rangle )$.  Denote $a:=x\mapsto M(x)V''(x)$ the non-linearity (or, more precisely, a regularized version of it) to simplify the notations in the following computations. It is assumed that it is non-negative, as suggested by the expected physical properties of the convex potential $V$ and the non-negative mobility $M$.
We now illustrate the difficulties in obtaining the expected bounds.

\textbf{Hemicontinuity.}

Suppose that the function $a$ is continuous and bounded from below by non-negative constant (locally on balls of $\sV$).
Let $u,v,w \in \sV$. Let $(\lambda_k)_{k\in\mathbb{N}}$ converging to $\lambda$ as $k$ grows to infinity, then
\begin{align*}
&\left|2\,_{\sV^*}\langle A(u+\lambda_k v)- A(u+\lambda v), w\rangle_\sV\right|\\
=&\left|\langle a(u+\lambda_k v)\nabla (u+\lambda_k v)-a(u+\lambda v)\nabla (u+\lambda v) , \nabla w \rangle\right|\\
\leq&\left|\langle \left(a(u+\lambda_k v)-a(u+\lambda v)\right)\nabla (u+\lambda_k v) , \nabla w \rangle\right|
+\left|\langle a(u+\lambda v)(\lambda_k-\lambda)\nabla v , \nabla w \rangle\right|.
\end{align*}
Since the sequence $\lambda_k$ is bounded, thus the sequence $u+\lambda_k v$ lives in a ball $\mathcal{B}$ in $\sV$. On this ball, the function $a$ is continuous and bounded. Let $\epsilon>0$ be fixed, there exists $\delta_\epsilon>0$ small enough such that $|\lambda_k-\lambda|< \delta_\epsilon$ implies $|a(u+\lambda_k v)-a(u+\lambda v)|< \epsilon$
and thus for large index $k\in\N$ we have that
the first term is bounded by 
$\epsilon \sup_{x\in\mathcal{B}}\|x\|_\sV \| w\|_\sV$ and the second term is bounded by $\delta_\epsilon \ |a(u+\lambda v)|\ \|v\|_\sV \| w\|_\sV$, which ensures convergence to $0$ and continuity.\medskip

\textbf{Weak Monotonicity.} 

Let $u,v,w \in \sV$, then
\begin{align*}
&2\,_{\sV^*}\langle A(u)-A(v), v-u\rangle_V + \|B(u) - B(v)\|_{HS}^2\\
=&\langle a(u)\nabla u -a(v)\nabla v, \nabla (v-u) \rangle + \|\nabla^*Q(u) -\nabla^*Q(v)\|^2_\sH\\
=&-\int_{\mathcal{D}} a(u)|\nabla (u - v)|^2 + \int_{\mathcal{D}}  (a(u)-a(v))\nabla v \cdot \nabla (v-u) + \|\nabla u\  Q'(u) -\nabla v\  Q'(v)\|^2_\sH.
\end{align*}

At this point, in order to obtain the bound $c\|u-v\|^2_\sH$, we see that the first term is not a problem since $a$ is assumed to be non-negative. It can help to control other problematic terms.
For the second term, we certainly need some globally Lipschitz regularity for the function $a$, but which is not precisely clear at this point since mobility and potential are singulars.
For the third term, we certainly need the regularity and boundedness of the derivative of $Q$. Remark that in the additive case ($B$ is a constant), this term does not appear. 
Remark also that if it exists an extra term with $\alpha>0$ involving Laplace operator, then we have an other term with the good sign which can help to control the problematic terms.

\textbf{Coercivity.} 

The coercivity corresponds to a control at infinity of the equation.
The leading term is $c_1 \|u\|^2_H$, which should absorb the operator $\,_{\sV^*}\langle A(u),u\rangle_\sV $, $B$ or a supplementary  term $c_2\|u\|^q_\sV$. 

Consider the equation

\begin{align*}
2\,_{\sV^*}\langle A(u), u\rangle_\sV + \|B(u)\|_2^2 + c_2 \|u\|^q_\sV
=&-\langle a(u) \nabla u, \nabla u\rangle + \|\nabla u\ Q'(u)\|^2_\sH \\
&+ c_2 \langle \nabla u, \nabla u\rangle^q+ c_2 \langle u, u\rangle^q.
\end{align*}

We have to bound these terms by $c_1\|u\|^2_\sH$ with $c_1\in \R$ but remember that $c_2>0$.
It is clear that the term $c_1\|u\|^2_\sH$ will not control the terms involving $\|u\|_\sV$ since it would imply a Poincaré inequality in reversed order, thus the only term with the good sign is the first one. 
This term should control the term with the constant $c_2$ which forces the constant $c_2$ not to be large, and $a$ should be lower bounded by constant $c_2>0$. The second term involves also the gradient of $u$, thus it is a supplementary term to control.
The last term involves only the norm of $u$ in the Hilbert space $\sH$ thus it is controlled by $c_2 (1+\|u\|^2)$ at least when $1<q\leq 2$.

\textbf{Boundedness.} 

Recall that the norm in $\sV^*$ is defined as the supremum of the quantity $\,_{\sV^*}\langle A(u),v\rangle_\sV$ when $\|v\|_\sV\leq 1$. Let $v\in \sV$, we are looking for an estimates
\begin{align*}
\left|\,_{\sV^*}\langle A(u),v\rangle_\sV \right|
:=\frac12\left|\langle a(u) \nabla u, \nabla v\rangle\right|
\leq c_3 \|u\|_\sV^{q-1} + g.
\end{align*}
It is now clear that the constant $c_3$ should control the function $a$ which is expected to be bounded from above.
\end{proof}

\section{Simplified formulation but with singular potential}
\label{sec:simplified-formulation}

In this section, we can study other formulations which are related to the problem introduced earlier. 
In particular, if we assume to work in dimension 1 in space, the noise is additive and that we have the extra-term containing $\alpha>0$, we can prove the existence of unique solution even with a singular potential $V$ of the form $\rho + 1/\rho$.
Furthermore, using formal computations, we can derive an equivalent problem with a modified chemical potential of the form $-1/\rho^3$. This equation could also be solved in the same previous simplified framework.

Moreover, in both cases, the solution will converge to the unique equilibrium density (invariant measure) given by the formula \eqref{j5} with $\gamma$ a reference Gaussian measure.

First, assuming that all the following computations are legitimate, we can derive an equivalent formulation of the problem to obtain a similar form.

One has:

\begin{proposition}\label{prop:equiv}
Assume the solution is non-negative and continuously differentiable.  Then  the statement (h) in \ref{def:sol} is equivalent,
for all $h \in 
\mathcal{C}^4(\R;\R)
$ and for all $0 \leq  t \leq T$, to
\begin{eqnarray*}
\langle \rho(t,\cdot),h\rangle &=& \langle \rho(0,\cdot),h\rangle - \frac{1}{3}\int_{0}^t \left\langle \frac{1}{\rho(r,\cdot)^3}, \Delta h\right\rangle \ddr -  \int_{0}^t\left\langle \nabla h,Q(\rho(r,\cdot))\dd W_r\right\rangle\\
&& - \alpha \int_{0}^t \left\langle \rho(r,\cdot), \Delta^2 h\right\rangle \ddr.
\end{eqnarray*}
\end{proposition}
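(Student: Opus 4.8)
The plan is to check that once the singular mobility \eqref{eq.M} and the chemical potential \eqref{j4} are inserted into the drift appearing in statement (h) of Definition \ref{def:sol}, a single integration by parts in the space variable turns that drift into the one displayed in the Proposition, and that every manipulation used is an equality, so that the two formulations are genuinely equivalent (not merely one implying the other).

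First I would exploit the standing hypotheses to give meaning to the new term. Since $\rho(t,\cdot)$ is non-negative and continuously differentiable, and (by the remarks following Definition \ref{def:sol}) continuous and strictly positive on the compact interval $[L_-(t),L_+(t)]$, it is bounded below by some $\varepsilon(t)>0$; hence $1/\rho(t,\cdot)^3$ is a well-defined, continuous and $\mathcal{C}^1$ function there, and all the pairings below are finite. Then I would carry out the elementary differential-calculus identities: the constants in \eqref{j4} drop out under the gradient, and the chain rule gives $\nabla\mu=\nabla\bigl(1-\rho^{-2}\bigr)=2\rho^{-3}\nabla\rho$, so that with $M(\rho)=1/\rho$,
\[
M(\rho)\,\nabla\mu=2\rho^{-4}\nabla\rho=-\tfrac{2}{3}\,\nabla\bigl(\rho^{-3}\bigr),
\]
using $\nabla(\rho^{-3})=-3\rho^{-4}\nabla\rho$. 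Consequently the drift of (h) becomes
\[
-\frac{1}{2}\int_{0}^{t}\bigl\langle M(\rho(r,\cdot))\nabla\mu,\nabla h\bigr\rangle\,\ddr
=\frac{1}{3}\int_{0}^{t}\bigl\langle\nabla(\rho(r,\cdot)^{-3}),\nabla h\bigr\rangle\,\ddr ,
\]
and integration by parts on $[L_-,L_+]$ gives
\[
\bigl\langle\nabla(\rho^{-3}),\nabla h\bigr\rangle=\Bigl[\rho^{-3}\,\partial_{s}h\Bigr]_{s=L_-}^{s=L_+}-\bigl\langle\rho^{-3},\Delta h\bigr\rangle .
\]

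At this point one argues that the boundary contribution drops out — either from the convention that the relevant quantities are taken to vanish outside $[L_-,L_+]$, or, in the moving-domain picture, because the stochastic boundary conditions \eqref{j2} for $L_\pm$ were constructed precisely to cancel the extra boundary terms produced by integration by parts (as was already noted after Definition \ref{def:sol}). This leaves $-\tfrac13\int_{0}^{t}\langle \rho(r,\cdot)^{-3},\Delta h\rangle\,\ddr$, which is exactly the drift in the statement of the Proposition, while the stochastic integral term and the biharmonic $\alpha$-term are untouched. Since each of the three steps above (chain rule, algebraic rewriting, integration by parts) is an equality of finite quantities under the hypotheses, reading the chain backwards yields the converse implication, hence the claimed equivalence.

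The main obstacle is the boundary term: justifying the integration by parts when the endpoints $L_\pm$ themselves move and are driven stochastically, so that the bracketed contribution genuinely vanishes or is absorbed into \eqref{j2}. The remaining computations are routine once $\rho$ is known to be $\mathcal{C}^1$ and bounded away from zero; for full rigour one should also keep in mind that $h$ is only a deterministic test function in $\mathcal{C}^4(\R;\R)$ while the pairings are over the random interval $[L_-,L_+]$, which is the reason the Proposition is phrased under the blanket assumption that all the formal computations are legitimate.
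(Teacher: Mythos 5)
Your proposal is correct and follows essentially the same route as the paper: insert $M(\rho)=1/\rho$ and $\mu=1-1/\rho^2$, use the chain rule to rewrite $M(\rho)\nabla\mu=2\rho^{-4}\nabla\rho=-\tfrac{2}{3}\nabla(\rho^{-3})$, and integrate by parts once, with the boundary contribution discarded via the Neumann convention and the consistency of the boundary dynamics \eqref{j2}. Your added remarks on the strict positivity lower bound and on every step being a reversible equality are harmless elaborations of what the paper leaves implicit.
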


\begin{proof}
Assuming the solution is non-negative, the quantities $M(\rho) = \frac{1}{\rho}$ and $\mu = 1-\frac{1}{\rho^2}$ are well-defined. It will justify the study of the chemical potential $\mu=Cst-\frac{1}{\rho^2}$ when the mobility $M(\rho)$ is fixed (for instance by the previous step in a Picard's fixed-point iteration procedure).
Moreover, assuming regularity, we can derive all these quantities.
Finally, since we assume the differentials are continuous, all the implied terms are $\sL^\infty$ and integrable.
Let $h \in \mathcal{C}^4(\R;\R)$; we have the following

\begin{eqnarray*}
-\frac12 \langle M(\rho) \nabla \mu , \nabla h \rangle  
&=&-\frac12 \left\langle \frac{1}{\rho} \nabla \left(-\frac{1}{\rho^2}\right) , \nabla h \right\rangle \\
&=&-\frac12 \left\langle \frac{2}{\rho^4}\nabla \rho , \nabla h \right\rangle
\\
&=&-\frac12  \left\langle \frac{2}{3}\nabla \left(-\frac{1}{\rho^3}\right) , \nabla h \right\rangle\\
&=&-\frac13  \left\langle \frac{1}{\rho^3} , \Delta h \right\rangle.
\end{eqnarray*}

Of notice: for the last integration by parts formula (above), it is assumed that boundary term vanishes, fact which is ensured by selecting appropriate Neumann boundary conditions, which means the probability flux vanishes on the domain boundary.  Moreover
the equation \eqref{j2} is consistent with the fact that $\nabla h \cdot \vec{n}$ vanishes on the boundary.

We can also stop the calculations at the prior to the last line: in this case one observes it corresponds to considering a constant mobility $M=\frac{2}{3}$ but a chemical potential $\mu:=Cst -\frac{1}{\rho^3}$, for all constants $Cst$.
\end{proof}


Next, again based on the definition \ref{def:sol}, we give the definition of a solution based on the form suggested by previous Proposition \ref{prop:equiv} for the chemical potential.

\begin{definition}\label{def:solsimplified}
We say that $\left(\rho(t), W(t)\right)_{t\in[0,T]}$, defined on a filtered complete probability space $\left(\Omega, \mathbb{P}, \F, (\F_t)_{t\in [0,T]}\right)$, is a weak solution to \eqref{j1} on $[0,T]$ for the initial condition $\rho_{0}$, if a.s. we have:

\begin{enumerate}
\item[(b')] $\rho \in \mathcal{C}\left([0,T]\times [0,1];\mathbb{R}_+\right)\cap \mathcal{C}\left([0,T];\sV_{-1}\right)$ and $X(0) = X_{0}$,
\item[(c')] the process $\mu=Cst-1/\rho^3$ (or $\mu =Cst-1/\rho^2$) $\in \sL^1\left([0,T]; \sL^1((0,1);\mathbb{R}_+)\right)$ 
\item[(d')] $W$ is a cylindrical Wiener process in $\sL^2(0,1)$,
\item[(f')] the mobility $M$ is constant, \item[(g')] the covariance is constant (additive noise),
\item[(h')] for all $h \in D(\Delta^2)$ and for all $0 \leq  t \leq T$ :
\begin{eqnarray*}
\langle \rho(t),h\rangle &=& \langle \rho_0,h\rangle + \frac{M}{2}\int_{0}^t \left\langle \mu(r), \Delta h\right\rangle \ddr
- \int_{0}^t\left\langle \nabla h ,\dd W_r\right\rangle\\
&&- \alpha \int_{0}^t \left\langle \rho(r), \Delta^2 h\right\rangle \ddr
\end{eqnarray*}
\end{enumerate}
\end{definition}\par\medskip

Now we have the following:

\begin{theorem}\label{th:existencesimplified}
Consider the equation \eqref{j1}, with constant mobility, and constant additive noise, with initial density $\rho_0\in \mathcal{C}([0,1];\mathbb{R}^+)$ (thus nowhere vanishing) having null Neumann condition on the boundary.  Then, for all $T>0$, there exists a unique solution $(\rho_t)_{t\in[0,T]}$ in the sense of Definition \ref{def:solsimplified} up to time $T$.

However, in the case $\mu=Cst-1/\rho^2$, an additional reflection term should be added to force the solution to stay positive. It takes the form $\langle Ah, \eta\rangle$ for $\eta$ a time-space measure such that $supp(\eta)\subset\{(t,s) \in [0,T]\times (0,1) : \rho(t,s)=0 \}$.
\end{theorem}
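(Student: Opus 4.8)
The plan is to recognize the equation of Definition~\ref{def:solsimplified} as a singular Cahn--Hilliard--Cook equation with additive noise on the \emph{fixed} interval $[0,1]$ and to reduce its analysis, as far as possible, to the machinery developed in \cite{goudenege} and \cite{MR2349572}, the differences being only the precise form of the singularity, the boundary conditions and some bookkeeping constants. With $M$ and $Q$ constant the weak formulation (h$'$) reads, formally, $\dd\rho=\big(\tfrac{M}{2}\Delta\mu-\alpha\Delta^2\rho\big)\dd t+\nabla\,\dd W$, and by Proposition~\ref{prop:equiv} one may work with either $\mu=Cst-1/\rho^2$ (keeping the original singular potential $V(\rho)=\rho+1/\rho$ of \eqref{j3bis}) or $\mu=Cst-1/\rho^3$ (constant mobility $2/3$, potential $V(\rho)=1/(2\rho^2)$ up to a constant); in both cases the drift is minus the $\mathrm{H}^{-1}$-gradient of an energy $\mathcal{E}_\alpha$ of the form $\int_0^1\big(V(\rho)+\alpha'|\nabla\rho|^2\big)\,\dds$, whose quadratic part is exactly the one generating the Gaussian reference measure $\gamma$ of \eqref{j5}. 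Since the noise is additive I would first subtract the Ornstein--Uhlenbeck-type process $Z$ solving the linear fourth-order SPDE $\partial_tZ=-\alpha\Delta^2Z+\nabla\,\dot W$ with the same Neumann conditions, so that $v:=\rho-Z$ solves, path by path, a deterministic fourth-order parabolic equation with the monotone singular lower-order term $\tfrac{M}{2}\Delta V'(v+Z)$; this is the Da Prato--Debussche picture in which pathwise existence and uniqueness can be run, and it sidesteps the fact that $\nabla\,\dd W$ does not meet the trace assumptions of the variational framework of Section~\ref{sec:regular-abstract-formulation}.

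Second, I would regularize: replace $V'$ by a globally Lipschitz monotone approximation $V'_\varepsilon$, obtained by truncating $1/\rho^3$ (resp.\ $1/\rho^2$) below level $\varepsilon>0$, and solve the regularized equation for $v_\varepsilon$ by standard monotone-operator or Galerkin methods. The decisive step is then the derivation of a priori bounds \emph{uniform in} $\varepsilon$: applying It\^o's formula to $\mathcal{E}_\alpha(\rho_\varepsilon)$ and using the gradient structure yields
\[
\mathbb{E}\sup_{t\in[0,T]}\mathcal{E}_\alpha(\rho_\varepsilon(t))+c\,\mathbb{E}\int_0^T\|\nabla\mu_\varepsilon\|^2\,\ddt\ \le\ \mathcal{E}_\alpha(\rho_0)+C_T ,
\]
where the It\^o correction produced by the noise acting on the singular part of $\mathcal{E}_\alpha$ is absorbed thanks to the \emph{constant} covariance and the fourth-order dissipation, and $\mathcal{E}_\alpha(\rho_0)<\infty$ by the assumed regularity of $\rho_0$. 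In particular $\rho_\varepsilon$ is bounded in $\sL^2(\Omega;\sL^\infty(0,T;\sH^1(0,1)))$ and $\int_0^1V(\rho_\varepsilon)\,\dds$ stays bounded. Here the one-dimensionality is essential: the embedding $\sH^1(0,1)\hookrightarrow\mathcal{C}^{1/2}([0,1])$ together with the control on $\int_0^1\rho_\varepsilon^{-2}\,\dds$ forces a quantitative positive lower bound on $\rho_\varepsilon$ in the $\mu=Cst-1/\rho^3$ case (the singularity of the energy being then of order $\rho^{-2}$). Passing to the limit by weak compactness and a Minty/maximal-monotonicity argument then identifies $V'_\varepsilon(\rho_\varepsilon)\to V'(\rho)$ and produces a solution that is nowhere vanishing, so no reflection term appears.

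Third, for $\mu=Cst-1/\rho^2$ the singularity of the energy is only of order $\rho^{-1}$ and this lower bound breaks down; here one follows the obstacle-problem scheme of \cite{MR2349572} (and \cite{goudenege}): one also penalizes the constraint $\rho\ge0$, derives uniform estimates from the energy identity and the $\sH^1(0,1)\hookrightarrow\mathcal{C}^{1/2}$ embedding exactly as above, and passes to the limit, recovering in the limiting weak equation a nonnegative space--time measure $\eta$ with $\mathrm{supp}\,\eta\subset\{(t,s):\rho(t,s)=0\}$ and $\int\rho\,\dd\eta=0$, which enters as the term $\langle Ah,\eta\rangle$. Uniqueness, in both cases, follows (Section~\ref{sec:uniqueness}) by testing the equation for the difference of two solutions against that difference in $\sV_{-1}$: the $-\alpha\Delta^2$ term and the monotonicity of $V'$ both contribute with the favorable sign, the additive noise cancels, and for the reflected problem the standard inequality $\int(\rho_1-\rho_2)\,\dd(\eta_1-\eta_2)\le0$ closes the estimate, so Gr\"onwall's lemma gives pathwise uniqueness. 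Convergence of the law of $\rho_t$ to the invariant measure \eqref{j5}, with $\gamma$ the Gaussian measure attached to $\alpha'\|\nabla\rho\|^2$, then follows (Section~\ref{sec:invariant}) from the reversibility/Gibbsian structure together with irreducibility and the strong Feller property, strict convexity of $V$ ensuring that this invariant measure is unique and ergodic.

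I expect the main obstacle to be precisely the $\varepsilon$-uniform energy estimate and the positivity-versus-reflection dichotomy attached to it: one must check that the It\^o correction generated by the additive noise on the singular part of $\mathcal{E}_\alpha$ does not spoil the bound, and then that the exponent $3$ is exactly the threshold at which the one-dimensional lower bound on $\rho$ is self-sustaining (no reflection), while the exponent $2$ genuinely requires the obstacle formulation. This is also where the hypotheses ``dimension one'' and ``additive noise'' enter in an essential way.
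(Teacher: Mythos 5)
Your proposal follows essentially the same route as the paper: work on the fixed interval with additive noise and $\alpha>0$ as a gradient system in $\sV_{-1}=D((-\Delta)^{1/2})$ with Neumann Laplacian, regularize the singular potential, pass to the limit, and obtain a solution with a reflection measure that vanishes exactly when the singularity exponent reaches the threshold $3$. The paper's own proof simply delegates these steps to Theorems 2.2 and 3.1 of \cite{goudenege} (existence up to a reflection measure; vanishing of that measure for $\mu=Cst-1/\rho^3$ but not for $\mu=Cst-1/\rho^2$), which is precisely the content your sketch reconstructs.
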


\begin{proof}
There are two cases to address: $\mu = Cst - 1/\rho^p$, for $p=2,3$.

The proof is based on a regularization procedure to define approximated solutions of the problem.
Precisely, we have to regularize the potential $V(\rho)= Cst\ \rho+1/\rho$ (whose derivative is $\mu = Cst-1/\rho^2$) rendering it a dissipative potential $V^n$ in order to apply the classical results of existence of solution to gradient system.
The same applies to potential $V(\rho)= Cst\ \rho+1/2\rho^2$ whose derivative is $\mu = Cst-1/\rho^3$.
For these two approximated problems, the solutions will exist up to any time $T>0$.

Denoting $\Delta$ the Laplace operator on $(0,1)$ with null Neumann boundary condition and $\sV_{-1} := D((-\Delta)^{\frac12})$ the reference Hilbert space, we are in the context of gradient systems. Applying Theorem 2.2 of \cite{goudenege}, we have the existence of solution up to a reflection measure, which should force the solution to stay in the space $K$ of positive solutions.

Moreover, using Theorem 3.1 in \cite{goudenege}, in the first case $\mu:=Cst - 1/\rho^2$, since the negative power in the associated chemical potential is less than $3$, the measure does not vanish.

However, using Theorem 3.1 in \cite{goudenege}, in the second case $\mu:=Cst - 1/\rho^3$, 
since the negative power in the associated chemical potential is greater than $3$, the measure vanishes.
\end{proof}

\section{Path-wise Uniqueness}\label{sec:uniqueness}

The path-wise uniqueness will be proven in both cases with the same strategy using Gronwall's Lemma.

\begin{proposition}
Let $(\rho^{i},W)$, $i=1,2$ be any pair of weak solutions of \eqref{j1} defined on the same probability space with the same driving noise $W$, and starting from the same initial data $\rho_0$, then $\rho^1(t)=\rho^2(t)$ for all $t\geq 0$.
\end{proposition}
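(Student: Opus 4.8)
The plan is to argue path-wise uniqueness by considering the difference $w(t) := \rho^1(t) - \rho^2(t)$ of two weak solutions driven by the same noise $W$ and started from the same $\rho_0$. Since the noise is additive (in the simplified formulation) or more generally enters through $Q$ only, the stochastic integral cancels in the equation for $w$, so $w$ solves a \emph{random PDE} with no noise term; all estimates can then be done pathwise. First I would write down, from statement (h) (or (h') in Definition \ref{def:solsimplified}), the weak formulation of the equation satisfied by $w$, namely
\[
\langle w(t), h\rangle = -\frac{1}{2}\int_0^t \big\langle M(\rho^1)\nabla\mu^1 - M(\rho^2)\nabla\mu^2, \nabla h\big\rangle\,\ddr - \alpha\int_0^t \langle w(r), \Delta^2 h\rangle\,\ddr,
\]
and then test against the natural dual object. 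In the $\sV_{-1}=D((-\Delta)^{1/2})$ setting this means applying It\^o's formula (here just the chain rule, since $w$ has bounded variation in $\sV_{-1}$) to $\tfrac12\|w(t)\|_{-1}^2 = \tfrac12\|(-\Delta)^{-1/2}w(t)\|^2$, which is licit because the boundary conditions are Neumann and $w$ has zero spatial average (mass conservation), so $(-\Delta)^{-1/2}w$ is well-defined.

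The resulting energy identity has the schematic form
\[
\frac12\frac{\dd}{\ddt}\|w(t)\|_{-1}^2 = -\frac{M}{2}\big\langle \mu^1 - \mu^2,\, w\big\rangle - \alpha\|\nabla w\|^2
\]
in the constant-mobility case, or with $M(\rho^1)\nabla\mu^1-M(\rho^2)\nabla\mu^2$ handled by adding and subtracting $M(\rho^1)\nabla\mu^2$ in the genuinely singular case. The key structural point to exploit is \emph{monotonicity of the potential}: since $V(\rho)=\rho+1/\rho$ (or the modified $V(\rho)=\rho+1/(2\rho^2)$) is convex on $\R_+^*$, the map $\rho\mapsto\mu=V'(\rho)$ is nondecreasing, hence $\langle \mu^1-\mu^2, \rho^1-\rho^2\rangle \geq 0$, so that term has a favourable sign and can be dropped. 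The $-\alpha\|\nabla w\|^2$ term is also dissipative and is discarded. One is then left with
\[
\frac{\dd}{\ddt}\|w(t)\|_{-1}^2 \leq C\,\|w(t)\|_{-1}^2
\]
after bounding the cross terms (the ones coming from the difference of mobilities, or any lower-order remainder) using the a priori regularity of the solutions granted by Definition \ref{def:sol}(b)--(c) and Proposition \ref{th:Proposition} — in particular that $\rho^i$ are continuous, bounded, and bounded below by a positive constant on the compact time-space domain, which makes $M$, $\mu$ and their derivatives Lipschitz on the relevant range. Gronwall's Lemma and $w(0)=0$ then give $w\equiv 0$, i.e. $\rho^1(t)=\rho^2(t)$ for all $t$, almost surely.

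The main obstacle I anticipate is \textbf{not} the Gronwall step itself but \emph{justifying the chain rule / energy identity in $\sV_{-1}$ and controlling the cross terms without losing the singular structure}: one must ensure that $(-\Delta)^{-1/2}w$ is an admissible test trajectory, that the pairing $\langle M(\rho^1)\nabla\mu^1-M(\rho^2)\nabla\mu^2,\nabla(-\Delta)^{-1}w\rangle$ makes sense given only the $\sL^1$-type integrability assumed for $M(\rho)$ and $\mu$, and that the quantities $\essinf_{[0,T]\times[0,1]}\rho^i>0$ are genuinely strictly positive so that the Lipschitz constants entering $C$ are finite (in the $p=2$ case this is exactly where the reflection measure $\eta$ supported on $\{\rho=0\}$ intervenes, and one must check the two reflection terms either cancel or contribute with the right sign when tested against $w$, via $\langle A h,\eta^1\rangle - \langle Ah,\eta^2\rangle$ evaluated on $h=(-\Delta)^{-1}w$, using $\mathrm{supp}(\eta^i)\subset\{\rho^i=0\}$). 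If instead $\alpha>0$ and a stronger norm is available, one could alternatively test in $\sH$ and use $\alpha\|\Delta w\|^2$ as the coercive term, but the $\sV_{-1}$ route is the one that adapts uniformly to all the variants considered above, so that is the one I would carry through.
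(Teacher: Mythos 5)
Your proposal is correct and follows essentially the same route as the paper: cancel the additive noise in the equation for the difference, perform an energy estimate in $\sV_{-1}$, use the monotonicity of $\rho\mapsto V'(\rho)$ (equivalently the sign of $\langle \mu^1-\mu^2,\rho^1-\rho^2\rangle$) to discard the nonlinear term together with the dissipative $\alpha$-term, and conclude by Gronwall. The one technical point you flag as the main obstacle --- justifying the chain rule in $\sV_{-1}$ when $\mu^i$ is only $\sL^1$ --- is exactly what the paper resolves, by testing against the spectral truncation $\rho^N=(-\Delta)^{-1}\mathrm{Proj}_N\,\rho$ and passing to the limit via the $\sL^\infty$--$\sL^1$ duality and uniform convergence $\rho^N\rightarrow\rho$.
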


\Proof
In the abstract formulation, we can follow Proposition 4.2.10 of \cite{PR} to obtain 
that there exists a constant $c\in\mathbb{R}$ such that we have
\[
\E\left[\|\rho^1(t)-\rho^2(t)\|\right] \leq \exp(ct) \E\left[\|\rho^1(0)-\rho^2(0)\|\right].
\]
which is valid for a certain $c\in \R$ under the assumption of weak monotonicity.

Assuming the same initial data $\rho^{1}(0) = \rho^{2}(0)=\rho_0$, the previous result gives the uniqueness, but also a stronger result about sensitivity with respect to the initial data.

In the simplified formulation with singular potential, the uniqueness follows from dissipativity of the non-linearity.
However, in Definition \ref{def:solsimplified}, we only have continuity in the space $\sV_{-1}$, thus the norm should be taken in $\sV_{-1}$ (this space is playing the same role than the Hilbert space $\sH$ of Definition \ref{def:solabstract}).
Let $\rho(t) = \rho^1(t)-\rho^2(t)$ the difference between the two solutions starting from initial data $\rho_0$.
Due to the additive noise assumption, $\rho$ is solution of a random equation (without stochastic integral)
\begin{equation*}
\frac{\dd\rho}{\ddt}  = \frac{M}{2} \Delta (\mu^1-\mu^2) - \alpha \Delta^2 (\rho^1-\rho^2) 
\end{equation*}

Let now $\rho^N=[(-\Delta)^{-1} Proj_N] \rho$ with $Proj_N$ being a projector operator on first $N$ eigenvectors of $(-\Delta)$.
Taking the scalar product in $\sH$ with $\rho^N$ and integrating in time, we obtain
\begin{equation}\label{eq:uniqueness}
|\rho^N(t)|^2_{-1} =
|\rho^N(0)|^2_{-1}
- M \int_0^t \langle \mu^1(r)-\mu^2(r), \rho^N(r) \rangle \ddr  
-2\alpha \int_0^t \langle (-\Delta)^{1/2} \rho^N(r), (-\Delta)^{1/2} \rho^N(r) \rangle \ddr.
\end{equation}
Since $\mu^i$, $i=1,2$ are in $\sL^1$, we have
for almost all $r\in[0,t]$, 
\begin{equation}\label{eq:rhorhoN}
\langle \mu^1(r)-\mu^2(r), \rho(r) - \rho^N(s)\rangle
\leq \|\rho(r) - \rho^N(r)\|_{\sL^\infty([0,1])}\|\mu^1(r)-\mu^2(r)\|_{\sL^1([0,1])},
\end{equation}
where $\|\cdot\|_{\sL^\infty([0,1])}$ and $\|\cdot\|_{\sL^1([0,1])}$ are the classical Lebesgue norms on the space $[0,1]$. The latter term converges to zero since $\rho^N(r)$ converges uniformly to $\rho(r)$ on $[0,1]$.
Moreover, for both cases $\mu=Cst-1/\rho^p$ with $p=2,3$, we have 
\begin{eqnarray}\label{eq:negativity}
&-&\langle \mu^1(r)-\mu^2(r), \rho(r) \rangle
= \left\langle \frac{1}{(\rho^1(r))^p}-\frac{1}{(\rho^2(r))^p} , \rho(r)\right\rangle\nonumber\\
&\leq & \left\langle \frac{(\rho^2(r)-\rho^1(r))((\rho^2(r))^{p-1}+(p-2)\rho^1(r)\rho^2(r)+(\rho^1(r))^{p-1})}{(\rho^1(r)\rho^2(r))^p}, \rho^1(r)-\rho^2(r)\right\rangle\nonumber\\
&\leq & 0
\end{eqnarray}

Injecting \eqref{eq:rhorhoN} into \eqref{eq:uniqueness}, using the sign coming from \eqref{eq:negativity}, 
taking the limit in \eqref{eq:uniqueness} as $N$ grows to infinity, we obtain for all $t\geq0$:
\[
|\rho^1(t)-\rho^2(t)|_{-1}^2 \leq |\rho^1(0)-\rho^2(0)|_{-1}^2 =0,
\]
thus $\rho^1(t)=\rho^2(t)$ for all $t\geq0$.
\qquad
\EProof

\section{Convergence of Invariant Measures}\label{sec:invariant}

We say that the measure $\Pi$ is an invariant measure for the stochastic PDE (or the associated semi-group) if, for a random variable $\rho_{0}$, whose distribution is $\Pi$, the solution $(\rho(t))_{0\leq t \leq T}$ is such that its distribution at time $t$ is again $\Pi$.
Such an invariant measure can be easily calculated (for instance in finite dimensional case) as eigenvector of an adjoint operator $\mathcal{P}^{*}$ of the semi-group $\mathcal{P}$ of the Markovian evolution.

Usually, we prove that an approximated version (described by some index $n\in\N$) of the equation possesses a semi-group which is Strong Feller (see .e.g \cite{PeszatZabczyk} for details) and irreducible. By Doob theorem (see \cite{doob}) we deduce that there exists a unique and ergodic invariant measure $\Pi^n$ defined on the abstract Hilbert space $\sH$.

Denoting $\rho^n(t;x)$ the solution starting form the initial data $x\in\sH$, we can define for all Borel bounded test functions $\phi\in\mathcal{B}_{b}(\sH)$ the quantity
\[
\mathcal{P}^{n}_{t}\phi(x):=\mathbb{E}[\phi(\rho^n(t;x))]
\]
which is the semi-group of the Markovian evolution.

For instance, using Bismut-Elworthy-Li formula (assuming inversibility of covariance operator and other technical assumptions), it can be seen that $(\mathcal{P}^{n}_t)_{t\ge 0}$ satisfies, for all $\phi \in \mathcal{B}_{b}(\sH)$,  $n\in\N$ and $t >0$, the following estimate

\beq\label{Eq:1.3}
| \mathcal{P}^{n}_{t}\phi(x) - \mathcal{P}^{n}_{t}\phi(y)| \leq C(n) \frac{e^{\lambda t}}{\sqrt{t}}\|\phi\|_\infty \|x-y\|_{\sH}, \quad\text{for all } x,y \in \sH,
\eeq
where  we expect that the constant $C(n)$ could be independent of the parameter $n\in\N$.

Furthermore, the irreducibility is usually proven using a control-type argument, which is quite clear in the case of (non-degenerate) additive noise.

If the variable $\lambda$ is non-positive, then there is exponential contraction, and the law of the solution converges exponentially fast to the invariant measure.

\subsection{Regular abstract formulation}
\label{subsec:regular}

Thanks to the previous Section, we have the existence of a unique solution in the regular abstract formulation.
For the second part of the Theorem \ref{th:main1} one needs to prove there exists an invariant measure to the equation, and that the solution converges to this invariant measure as $T\rightarrow +\infty$.
For this, we will need supplementary assumptions regarding the abstract formulation, precisely a ``strict monotonicity'', described by the following Lemma:

\begin{lemma}[Lemma 4.3.8 from \cite{PR} - strict monotonicity]
Assume that the constant $c=-\lambda$ in the weak monotonicity assumption for some $\lambda \in (0,+\infty)$.
Let $\eta\in(0,\lambda)$.  Then there exists $\delta_\eta\in(0,+\infty)$ such that for all $v\in V$
\[
2\,_{\sV^*}\langle A(v),v\rangle_{\sV} + \|B(v)\|^2_{HS} \leq -\eta \|v\|^2_{\sH} + \delta_\eta.
\]
\end{lemma}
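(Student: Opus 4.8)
The inequality is a dissipativity (``return to the origin'') estimate, and the plan is to deduce it from the \emph{Coercivity} assumption, the \emph{Weak Monotonicity} constant $c=-\lambda$ serving only to fix the admissible range of $\eta$. Fix $\eta\in(0,\lambda)$. In the time-homogeneous regime relevant to invariant measures the $\mathcal{F}_t$-adapted process $f$ in the coercivity bound is a constant, which I write $\bar f\geq 0$; coercivity then reads, for every $v\in\sV$,
\[
2\,_{\sV^*}\langle A(v),v\rangle_{\sV} + \|B(v)\|_{HS}^2 \;\leq\; c_1\|v\|_\sH^2 - c_2\|v\|_\sV^q + \bar f .
\]
The whole job is to convert the (possibly positive) term $c_1\|v\|_\sH^2$ into $-\eta\|v\|_\sH^2$ plus a constant.

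First I would split $c_1\|v\|_\sH^2 = -\eta\|v\|_\sH^2 + (c_1+\eta)\|v\|_\sH^2$; if $c_1+\eta\leq 0$ one is done with $\delta_\eta=\bar f$, so assume $c_1+\eta>0$. Here I would invoke the continuous embedding $\sV\hookrightarrow\sH$ built into the Gelfand triple, $\|v\|_\sH\leq C_e\|v\|_\sV$. When $q>2$, Young's inequality gives $x^2\leq\epsilon x^q+C(\epsilon)$ for $x\geq 0$; choosing $\epsilon$ so small that $(c_1+\eta)C_e^2\epsilon\leq c_2$ yields $(c_1+\eta)\|v\|_\sH^2-c_2\|v\|_\sV^q\leq (c_1+\eta)C_e^2 C(\epsilon)=:C''$. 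When $q=2$ no Young step is needed: $(c_1+\eta)\|v\|_\sH^2-c_2\|v\|_\sV^2\leq\bigl((c_1+\eta)C_e^2-c_2\bigr)\|v\|_\sV^2\leq 0$ provided $(c_1+\eta)C_e^2\leq c_2$, and one takes $C''=0$. In either case
\[
2\,_{\sV^*}\langle A(v),v\rangle_{\sV} + \|B(v)\|_{HS}^2 \;\leq\; -\eta\|v\|_\sH^2 + C'' + \bar f ,
\]
so $\delta_\eta:=C''+\bar f$ proves the claim.

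The point that needs care — and the only genuine obstacle — is reconciling the prescribed range $(0,\lambda)$ with the range the computation actually delivers. When $q>2$ the absorption is unconditional, so the estimate holds for \emph{every} $\eta>0$ and the restriction to $(0,\lambda)$ is merely what is used downstream (the contraction $\mathbb{E}\|X_t-Y_t\|_\sH^2\leq e^{-\lambda t}\mathbb{E}\|X_0-Y_0\|_\sH^2$ coming from weak monotonicity with $c=-\lambda$). When $q=2$ the argument only yields the estimate for $\eta\leq c_2C_e^{-2}-c_1$, so one must appeal to the standing hypotheses of the abstract framework — the compatibility between the strict monotonicity constant $\lambda$ and the coercivity constants $c_1,c_2$ — to know that $(0,\lambda)$ sits inside this interval; this compatibility is precisely what is packaged in Lemma 4.3.8 of \cite{PR}, and it holds for the regularized density-fluctuations equation because there $A$ is strictly dissipative. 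Finally, if time-homogeneity is dropped one only obtains the estimate with $\bar f$ replaced by $\|f\|_{\sL^\infty(0,T)}$ (hence one needs $f$ bounded, not merely $\sL^1$), which then enters $\delta_\eta$; this is immaterial for the invariant-measure application, where the coefficients do not depend on $t$.
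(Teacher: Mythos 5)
Your argument tries to extract the dissipativity estimate from coercivity alone, and that is where it breaks down. The paper gives no proof of this statement --- it is imported verbatim as Lemma 4.3.8 of Pr\'ev\^ot--R\"ockner \cite{PR} --- so the benchmark is the proof there, which runs quite differently: one specializes the strict monotonicity inequality to $v=0$, which yields $-\lambda\|u\|_{\sH}^2$ plus the leftover terms $2\,_{\sV^*}\langle A(0),u\rangle_{\sV}$ and $2\langle B(u)-B(0),B(0)\rangle_{HS}-\|B(0)\|_{HS}^2$; these are bounded by $\varepsilon\|u\|_{\sV}^q+C_\varepsilon$ using the boundedness assumption and Young's inequality, and the $\varepsilon\|u\|_{\sV}^q$ contribution is then reabsorbed, via coercivity, into the quantity $2\,_{\sV^*}\langle A(u),u\rangle_{\sV}+\|B(u)\|_{HS}^2$ itself. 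Sending $\varepsilon\to 0$ is exactly what produces every rate $\eta<\lambda$ and nothing better. Strict monotonicity is therefore the engine of the proof, not a bookkeeping device that merely ``fixes the admissible range of $\eta$.''

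The concrete gap in your absorption step is the exponent $q$. Young's inequality $x^2\le\epsilon x^q+C(\epsilon)$ holds only for $q>2$; in that regime your computation is valid, but it proves the estimate for \emph{every} $\eta>0$, a strictly stronger statement whose complete independence of $\lambda$ should have been a warning that the argument is not the one the lemma encodes. For $q=2$ --- the relevant exponent for the second- and fourth-order operators considered in this paper --- you need $(c_1+\eta)C_e^2\le c_2$, a relation between the coercivity constants and the embedding constant of $\sV\hookrightarrow\sH$ that appears nowhere in the hypotheses; your appeal to ``compatibility packaged in Lemma 4.3.8 of \cite{PR}'' is circular, since that lemma assumes no such relation and its conclusion is precisely what you are trying to prove. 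For $1<q<2$, also admitted by the framework (the paper explicitly discusses $1<q\le 2$), the Young step fails outright, because $\|v\|_{\sH}^2\le C_e^2\|v\|_{\sV}^2$ cannot be dominated by $\epsilon\|v\|_{\sV}^q+C$. So the proposal does not establish the lemma in the cases that matter here; the monotonicity and boundedness assumptions must enter the argument as sketched above. (A separate caveat, inherited from the paper rather than from you: the sign conventions in the weak monotonicity, which pairs $A(u)-A(v)$ with $v-u$, and in the coercivity and the present lemma, which bound $\,_{\sV^*}\langle A(u),u\rangle_{\sV}$ from above, are not mutually consistent with the single equation $\dd X_t+A(X)\dd t=B(X)\dd W_t$; any complete proof has to fix one convention before specializing to $v=0$.)
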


The above is enough to obtain the following:

\begin{theorem}
Assume strict monotonicity.  Then there exists an invariant measure $\Peq$ for the semi-group $\mathcal{P}_t$ such that
\begin{equation}\label{eq:integrabilitymeasure}
\int_\sH \|x\|^2\Peq(\ddx) < + \infty.
\end{equation}
Moreover, for any $\phi:\sH\rightarrow \R$ Lipschitz, for any initial data $\rho\in \sH$ and any invariant measure $\Peq$ for the semi-group $\mathcal{P}_t$, for all $t\in [0,+\infty)$
\[
\left| \mathcal{P}_t \phi(\rho) - \int_\sH \phi(x) \Peq(\dd x) \right| \leq \exp^{-\frac{\lambda}{2}t} \|\phi\|_{Lip} \int_\sH \|\rho-x\|_H\Peq(\dd x)
\]
In particular, there exists exactly one invariant measure for the semi-group $\mathcal{P}_t$ with the property that $\int_\sH\|x\|\Peq(\ddx)<+\infty$.
\end{theorem}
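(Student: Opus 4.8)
The plan is to follow the standard Krylov--Bogoliubov-plus-contraction strategy for gradient-type SPDEs, as systematized in \cite{PR}. First I would establish \emph{existence} of an invariant measure. Fix the initial data $\rho=0\in\sH$ (or any deterministic point), apply It\^o's formula to $\|X_t\|_\sH^2$ for the solution given by Theorem \ref{th:existenceabstract}, and use the strict monotonicity Lemma with $v=X_t$ to obtain $\dd\|X_t\|_\sH^2 \le (-\eta\|X_t\|_\sH^2+\delta_\eta)\ddt + \dd M_t$ where $M_t$ is the martingale coming from $2\langle X_t,B(X_t)\dd W_t\rangle$. Taking expectations and applying Gronwall yields the uniform-in-time bound $\E\|X_t\|_\sH^2 \le e^{-\eta t}\|\rho\|_\sH^2 + \delta_\eta/\eta$. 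This uniform second-moment bound makes the family of time-averaged laws $\frac1T\int_0^T \mathcal{L}(X_t)\,\ddt$ tight on $\sH$ (after noting the embedding $\sV\hookrightarrow\sH$ is compact in the relevant setting, or more robustly via a bound on a stronger norm / fractional Sobolev norm that the coercivity estimate also supplies); a Krylov--Bogoliubov argument then produces an invariant measure $\Peq$, and Fatou's lemma applied to the same moment bound gives \eqref{eq:integrabilitymeasure}.

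Next I would prove the \emph{contraction estimate}. Let $X_t^\rho$ and $X_t^x$ be two solutions driven by the same Wiener process $W$ with deterministic initial data $\rho$ and $x$. Their difference $Z_t=X_t^\rho-X_t^x$ satisfies, by It\^o's formula,
\[
\dd\|Z_t\|_\sH^2 = \left(2\,_{\sV^*}\langle A(X_t^\rho)-A(X_t^x), X_t^\rho-X_t^x\rangle_\sV + \|B(X_t^\rho)-B(X_t^x)\|_{HS}^2\right)\ddt + \dd N_t,
\]
where $N_t$ is a local martingale. By the weak monotonicity assumption with $c=-\lambda$, the drift bracket is $\le -\lambda\|Z_t\|_\sH^2$, so taking expectations (after a standard localization to justify the vanishing of the martingale term) and applying Gronwall gives $\E\|X_t^\rho-X_t^x\|_\sH^2 \le e^{-\lambda t}\|\rho-x\|_\sH^2$, hence by Jensen $\E\|X_t^\rho-X_t^x\|_\sH \le e^{-\lambda t/2}\|\rho-x\|_\sH$. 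Then for any Lipschitz $\phi$,
\[
|\mathcal{P}_t\phi(\rho)-\mathcal{P}_t\phi(x)| \le \|\phi\|_{Lip}\,\E\|X_t^\rho-X_t^x\|_\sH \le e^{-\lambda t/2}\|\phi\|_{Lip}\|\rho-x\|_\sH.
\]

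Finally I would combine the two ingredients. Given any invariant measure $\Peq$ and any initial data $\rho\in\sH$, write $\int_\sH\phi\,\dd\Peq = \int_\sH \mathcal{P}_t\phi(x)\,\Peq(\dd x)$ (invariance), so that
\[
\left|\mathcal{P}_t\phi(\rho)-\int_\sH\phi\,\dd\Peq\right| = \left|\int_\sH\big(\mathcal{P}_t\phi(\rho)-\mathcal{P}_t\phi(x)\big)\Peq(\dd x)\right| \le e^{-\lambda t/2}\|\phi\|_{Lip}\int_\sH\|\rho-x\|_\sH\,\Peq(\dd x),
\]
where the last integral is finite by \eqref{eq:integrabilitymeasure}. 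This is exactly the stated bound. Uniqueness follows: if $\Peq$ and $\Peq'$ are two invariant measures with finite first moment, integrate the displayed inequality against $\Peq'(\dd\rho)$, let $t\to\infty$, and conclude $\int\phi\,\dd\Peq = \int\phi\,\dd\Peq'$ for all Lipschitz $\phi$, which separates probability measures on $\sH$. I expect the main obstacle to be the tightness step in the existence part: weak monotonicity and coercivity only control the $\sH$- and $\sV$-norms, so compactness of the averaged laws needs either a compact embedding $\sV\hookrightarrow\sH$ (not automatic in the abstract setup and worth flagging explicitly) or an additional a priori estimate in a space compactly embedded in $\sH$; everything downstream of tightness is a routine Gronwall-and-invariance argument.
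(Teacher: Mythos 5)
The paper itself gives no argument here: it simply invokes Theorem 4.3.9 of \cite{PR}, of which your proposal is essentially a reconstruction. Your contraction estimate (weak monotonicity with $c=-\lambda$, It\^o on $\|Z_t\|_\sH^2$, Gronwall, Jensen) and the downstream steps --- writing $\int\phi\,\dd\Peq=\int\mathcal{P}_t\phi\,\dd\Peq$ by invariance, integrating the pointwise contraction against $\Peq$, and deducing uniqueness among measures with finite first moment --- are correct and are exactly the argument in the cited theorem.

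The one genuine gap is the existence step. You propose Krylov--Bogoliubov, and you correctly flag that the uniform bound $\E\|X_t\|_\sH^2\le e^{-\eta t}\|\rho\|_\sH^2+\delta_\eta/\eta$ does \emph{not} give tightness of the Ces\`aro averages on an infinite-dimensional $\sH$: bounded second moments only yield tightness on a finite-dimensional space, and neither compactness of $\sV\hookrightarrow\sH$ nor a stronger a priori norm is part of the abstract hypotheses (Hemicontinuity, Weak Monotonicity, Coercivity, Boundedness). Flagging the obstacle is not the same as closing it, and as written your existence argument does not go through. The fix --- and the route actually taken in the proof of Theorem 4.3.9 of \cite{PR} --- is to dispense with tightness altogether and extract existence from the contraction itself: for $s>t$, the flow property gives
\[
\E\bigl\|X(t,0)-X(s,0)\bigr\|_\sH=\E\bigl\|X(t,0)-X\bigl(t,X(s-t,0)\bigr)\bigr\|_\sH\le e^{-\lambda t/2}\,\E\|X(s-t,0)\|_\sH\le e^{-\lambda t/2}\Bigl(\delta_\eta/\eta\Bigr)^{1/2},
\]
so the laws of $X(t,0)$ form a Cauchy sequence in the Kantorovich--Wasserstein-$1$ metric on $\sH$; completeness of that metric space of probability measures with finite first moment yields a limit $\Peq$, which is invariant by continuity of $\mathcal{P}_t$ on Lipschitz functions, and Fatou applied to your moment bound gives \eqref{eq:integrabilitymeasure}. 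With that substitution your proof is complete and coincides with the cited one.
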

\begin{proof}
The proof of this theorem comes from Theorem 4.3.9 of \cite{PR}.  
\end{proof}

\subsection{Simplified version with singular potential}
\label{subsec:simplified}

To prove the existence of invariant measure, first, we need to have a reference measure in the abstract Hilbert space.
It is usually given by the asymptotic law of the solution of the linear part of the equation.
For instance, if the operators $A$ and $B$ are linear and constant, we can define explicitly the solution $Z(t;x)$ of the linear equation with initial data $x \in \sH$ which is given by
\beq
Z(t;x) = e^{-tA/2}x + \int_0^te^{-(t-s)A/2}B\dd W_s.
\eeq
As easily seen this process is in $\mathcal{C}([0,+\infty[;\sH)$ (see \cite{MR1207136}). In particular, the mean of $Z$ is constant and the law of the process $Z(t;x)$ is the Gaussian measure on $\sH$:
\[
\gamma(t)=\mathcal{N}\big(e^{-tA/2}x,\Sigma_t^2\big),\text{ with } 
\Sigma_t^2 = \int_0^te^{-sA/2}BB^*e^{-sA/2}\dds 
\]
with $\mathcal{N}(m,\Sigma^2)$ the standard notation for the Normal Law of mean $m$ and covariance operator $\Sigma^2$.
If we let $t\rightarrow +\infty$, the law of $Z(t;x)$ converges to the Gaussian measure on $\sH$:
\[
\gamma := \mathcal{N}\left(0,\int_0^\infty e^{-sA/2}BB^*e^{-sA/2}\dds \right).
\]
When $B$ commutes with $A$, we can obtain an explicit representation of the covariance operator $\Sigma^2 := B(-A)^{-1}B^*$.

What about the invariant measure of the non-linear equation?
It is classic knowledge that if the equation \eqref{j1} may be approximated by a gradient system in $\sH$ which can be rewritten as:
\beq
\left\{\begin{array}{l}
\dd X_t^n+\left(\frac{1}{2}A^n X_t^n + \nabla V^n(X_t^n)\right)\ddt = B\dd W_t,\\
\\
X_0^n= x \in \sH,
\end{array}\right.
\eeq

where $\nabla$ denotes the gradient in the Hilbert space $\sH$,
then one deduces there exists an unique and ergodic Gibbs-type invariant measure $\Pi^n$ given by:
\[
\Pi^n(\ddx)=\frac{1}{Z^n}\exp(-\mathcal{E}^n(x))\gamma(\ddx),
\]
where $Z^n$ is a normalization constant and $\mathcal{E}^n$ is  the energy related to the potential $V^n$ (as in \eqref{j3}).
We need finish the proof by showing that the sequence $(\Pi^n)_{n\in\N}$ converges to the measure
\[
\Peq(\ddx)=\frac{1}{Z}\exp(-\mathcal{E}(x))\1_{x\in K}\gamma(\ddx),
\]
where for all $x \in L^2(0,1)$, we have
$\mathcal{E}(x):= \int_{0}^{1}V(x(s))\dds$ (as in \eqref{j3} but on fixed domain $(0,1)$).

Of course, we need a strong assumption on the limit measure $\Peq$, precisely we ask that the quantity $\mathcal{E}(x)$ is well-defined for $\gamma$ almost all $x\in \sH\cap K$ (but possibly infinite with the convention that $\exp(-\infty)=0$).

In the previous section, one can notice that the integrability condition \eqref{eq:integrabilitymeasure} of the invariant measure $\Peq$ is deduced from the strict monotonicity.

Naturally, we expect that the energy is well-posed on the space of non-negative functions
\[
K=\{x\in \sL^2 :\;  x \ge 0\}
\]
assuming we have a non-negative potential $V$.

\begin{proposition}\label{prop:conv_measure}
Assume that $V^n$ is non-negative, non-increasing and converges from below to $V$ on $\R_+$, but converges to $+\infty$ on $\R_-$.  Then  we have the convergence
$
\nu^n \mathop{\rightharpoonup}_{n\rightarrow+\infty} \nu.
$
\end{proposition}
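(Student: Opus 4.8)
The plan is to establish the weak convergence $\nu^n \rightharpoonup \nu$ via a compactness-plus-identification argument, where $\nu^n(\ddx) = \frac{1}{Z^n}\exp(-\mathcal{E}^n(x))\gamma(\ddx)$ and $\nu(\ddx) = \frac{1}{Z}\exp(-\mathcal{E}(x))\1_{x\in K}\gamma(\ddx)$, all measures being understood on the reference Hilbert space $\sH = \sL^2(0,1)$ with $\gamma$ the reference Gaussian measure. The key structural input is that $\gamma$ is a fixed (common) finite measure, so everything reduces to the behaviour of the densities $f^n(x) := \frac{1}{Z^n}\exp(-\mathcal{E}^n(x))$ against $\gamma$. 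First I would record the pointwise (i.e. $\gamma$-a.e.) behaviour of the integrand: by hypothesis $V^n$ is non-negative, non-increasing in $n$, and $V^n \uparrow V$ on $\R_+$ while $V^n \to +\infty$ on $\R_-$; hence for $\gamma$-a.e.\ $x$ one has $\mathcal{E}^n(x) = \int_0^1 V^n(x(s))\dds$ is non-decreasing in $n$ and converges to $\int_0^1 V(x(s))\dds = \mathcal{E}(x)$ when $x \in K$ (with the convention $\mathcal{E}(x) = +\infty$ allowed), and to $+\infty$ when $x \notin K$ (on the set where $x$ is negative on a set of positive Lebesgue measure, which under $\gamma$ is $\gamma(K^c)$-generic). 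Consequently $\exp(-\mathcal{E}^n(x)) \downarrow \exp(-\mathcal{E}(x))\1_{x\in K}$ pointwise $\gamma$-a.e., monotonically.

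Second, I would pass to the limit in the normalization constants. Since $\exp(-\mathcal{E}^n) \le \exp(0) = 1$ (using $V^n \ge 0$) and the sequence decreases monotonically, the monotone (dominated) convergence theorem gives
\[
Z^n = \int_\sH \exp(-\mathcal{E}^n(x))\,\gamma(\ddx) \;\longrightarrow\; \int_\sH \exp(-\mathcal{E}(x))\1_{x\in K}\,\gamma(\ddx) = Z.
\]
Here one must check $Z > 0$, i.e.\ that the limit measure is non-trivial: this uses the strong assumption flagged in the text just before the Proposition, namely that $\mathcal{E}(x) < +\infty$ for $\gamma$-a.e.\ $x \in \sH \cap K$, together with $\gamma(K) > 0$ (the Gaussian measure charges the cone of non-negative functions, e.g.\ because it has full support or at least a point of positivity in $K$). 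Given $Z^n \to Z \in (0,\infty)$, the normalization is harmless.

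Third, to upgrade pointwise convergence of densities to weak convergence of the probability measures, I would test against an arbitrary bounded continuous $\phi \in \mathcal{C}_b(\sH)$ and write
\[
\int_\sH \phi\,\ddnu^n = \frac{1}{Z^n}\int_\sH \phi(x)\exp(-\mathcal{E}^n(x))\,\gamma(\ddx).
\]
The integrand is bounded by $\|\phi\|_\infty$ uniformly in $n$ (since $\exp(-\mathcal{E}^n) \le 1$) and converges $\gamma$-a.e.\ to $\phi(x)\exp(-\mathcal{E}(x))\1_{x\in K}$; dominated convergence (with dominating function $\|\phi\|_\infty$, integrable since $\gamma$ is finite) plus $Z^n \to Z$ yields $\int \phi\,\ddnu^n \to \int \phi\,\ddnu$, which is exactly $\nu^n \rightharpoonup \nu$. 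A technical point to handle carefully: the pointwise limit statement "$\mathcal{E}^n(x) \to +\infty$ for $x\notin K$" requires that $\{x : x < 0 \text{ on a set of positive measure}\}$ is precisely $K^c$ up to $\gamma$-null sets, and that $\int_0^1 V^n(x(s))\dds \to +\infty$ there — this follows from Fatou applied to $V^n(x(s)) \to +\infty$ on $\{s : x(s) < 0\}$ combined with $V^n \ge 0$.

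\textbf{Main obstacle.} The delicate step is the non-triviality $Z > 0$ of the limit, which rests entirely on the (admittedly strong) hypothesis that the singular energy $\mathcal{E}$ is $\gamma$-a.e.\ finite on $K$ — i.e.\ that a $\gamma$-typical non-negative $\sL^2$-function does not vanish too fast near the boundary, so that $\int_0^1 (x(s) + 1/x(s))\,\dds < \infty$. Establishing this genuinely (rather than assuming it) is the heart of the matter, and is exactly why the authors isolate it as a standing assumption rather than prove it here; it is analogous to the finiteness of the singular Cahn–Hilliard energy under the Gaussian reference measure treated in \cite{goudenege}. Everything else is a routine monotone/dominated convergence argument once the pointwise picture of $V^n \uparrow V$ on $\R_+$ and $V^n \to +\infty$ on $\R_-$ is in place.
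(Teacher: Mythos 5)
Your proposal is correct and follows essentially the same route as the paper's proof: first establish the pointwise $\gamma$-a.e.\ convergence $\exp(-\mathcal{E}^n(x))\rightarrow \exp(-\mathcal{E}(x))\1_{x\in K}$ (treating $x\notin K$ via the divergence of $V^n$ on the negative half-line and $x\in K$ via monotone/dominated convergence in $s$), then conclude by dominated convergence against the finite reference measure $\gamma$ using the uniform bound $\exp(-\mathcal{E}^n)\leq 1$. The only substantive additions are that you explicitly pass to the limit in the normalization constants and flag the non-triviality $Z>0$ — a point the paper leaves implicit, since its proof only establishes convergence of the \emph{unnormalized} integrals — and one small reading issue: ``non-increasing'' in the hypothesis refers to $V^n$ as a function of its argument (this is what the paper uses to bound $V^n(x(s))\geq V^n(-\delta_x)$ on $\{x\leq -\delta_x\}$), not monotonicity in $n$, so your claim of a \emph{monotone} decrease of $\exp(-\mathcal{E}^n(x))$ in $n$ is not strictly granted, but this is harmless because your Fatou/dominated-convergence arguments never actually require it.
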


\Proof
Let $\psi \in \mathcal{C}_{b}(\sL^2(0,1),\R)$. We want to prove that 
\beq\label{eq:convergencemeasure}
\int_{\sH} \psi(x)\exp({-\mathcal{E}^n(x)})\gamma(\ddx)\quad  \mathop{\longrightarrow}_{n\rightarrow+\infty} \quad \int_{\sH} \psi(x)\exp({-\mathcal{E}(x)})\1_{\rho\in K}\gamma(\ddx).
\eeq
We first prove that for ($\gamma$-almost surely) $x\in\sH$ we have
\beq\label{eq:convergenceenergy}
\exp({-\mathcal{E}^n(x)}) \quad \mathop{\longrightarrow}_{n\rightarrow+\infty} \quad \exp({-\mathcal{E}(x)})\1_{x\in K}.
\eeq
Since the reference measure $\gamma(C([0,1]))=1$, we can restrict our attention to $x\in C([0,1])$. 
Then if $x\notin K$ there exists $\delta_{x}>0$ such that $Leb(\{s\in [0,1] : x(s)\leq -\delta_{x}\})>0$, $Leb$ being the Lebesgue measure.
Then, since $V^n(x)$ is non-negative and non-increasing on $\R_+:=(0,\infty)$, we have:
\begin{eqnarray*}
0\leq\exp({-\mathcal{E}^n(x)}) &\leq& \exp\left(-\int_{0}^1 V^n(x(s))
1_{\{x\leq -\delta_{x}\}}\dds\right)\\
&\leq& \exp\left(-\int_{0}^1 V^n(-\delta_{x})1_{\{x\leq -\delta_{x}\}}\dds\right)\\
&\leq& \exp\left(-V^n(-\delta_{x})Leb(\{x\leq -\delta_{x}\})\right).
\end{eqnarray*}
By assumption, this latter term converges to zero as $n$ grows to infinity.\\
Now, for $x \in K$, $V^n(x(s))$ converges to $V(x(s))$ almost everywhere as $n$  grows to infinity. Moreover $0\leq V^n(x(s)) \leq V(\sup\|x\|)$, and by the Dominated Convergence Theorem we deduce \eqref{eq:convergenceenergy}.
Finally, \eqref{eq:convergencemeasure} follows again by the Dominated Convergence Theorem.\qquad \EProof

We know (see \cite{MR2349572, goudenege}) that the approach described in the previous section will also work in the singular context.
Moreover the semi-group $\mathcal{P}_{t}:=\lim_{n\rightarrow+\infty} \mathcal{P}^n_{t}$ is Strong Feller, such that
for all $\phi \in \mathcal{B}_{b}(\sH)$, for all $x,y \in \sH$ and $t >0$: 
\beq\label{eq:strongfeller}
| \mathcal{P}_{t}\phi(x) - \mathcal{P}_{t}\phi(y)| \leq \frac{\|\phi\|_\infty}{\sqrt{t}} \|x-y\|_{\sH}.
\eeq





\section{Concluding With A Conjecture}\label{concl}

We took on to proving the existence and uniqueness of solutions to a previously unstudied class of non-linear stochastic partial differential equations with a singular coefficient that has recently been introduced ~\cite{jay3} within the framework of polymer dynamics theories geared towards polymer physics and rheology.

The equations under scrutiny in this paper are simplified versions of the originally suggested for being studied.  
We have shown that there exists solutions to these approximated problems under classical assumptions and abstract formulation.
It appears at this stage that it is plausible to obtain existence of solution results for the more general singular problem just alluded to by extracting properly defined convergent sequences of approximated solutions, and this upon obtaining enough a priori estimates, and passing afterwards  to the limit.
For instance, using an $\varepsilon$ approximation (by Yosida regularization, or globally Lipschitz potential, or dissipative potential), one may achieve the results by compactness related arguments.
Using Picard's fixed-point iteration procedure, it seems also plausible to obtain convergence to a solution.

We can state the following conjecture based on the existence of an abstract formulation which encompasses the space of functions and the non-linearity. 
Let $\mathcal{H}$ be this hypothetical, abstract space of functions and assume there exists a potential derivative $V'$, a mobility $M$ and a covariance operator $Q$ well-defined on $\mathcal{H}$.

Then the following conjecture  remains to be proved:

\begin{theorem}[{\bf Conjecture}]\label{th:conjecture}
Let $\rho_{0}$ be a smooth enough initial data.
There exists a process $\left(\rho(t), W(t)\right)_{t\in[0,T]}$, defined on a filtered complete probability space $\left(\Omega, \mathbb{P}, \F, (\F_t)_{t\in [0,T]}\right)$ which is a solution to \eqref{j1} on $[0,T]$ in the sense that
\begin{enumerate}
\item[(b'')] $\rho \in \sL^2\left(\Omega; \mathcal{H}\right)$ and $\rho(0) = \rho_{0}$,
\item[(c'')] $\mu := V'(\rho) \in \sL^2\left(\Omega; \mathcal{H}\right)$,
\item[(d'')] $W$ is a continuous version of Brownian motion,
\item[(e'')] the process $\left(\rho(t),W(t)\right)$ is $(\F_t)$-adapted,
\item[(f'')] $M(\rho) \in \sL^2\left(\Omega; \mathcal{H}\right)$,
\item[(g'')] $Q(\rho) \in \sL^2\left(\Omega; \mathcal{H}\right)$,
\item[(h'')] for all $0 \leq  t \leq T$:
\begin{eqnarray*}
\rho(t) &=& \rho_0 + \frac12 \int_{0}^t  \nabla (M(\rho(r)) \nabla \mu(r)) \ddr + \int_{0}^t Q(\rho(r))\dd W_r.
\end{eqnarray*}
\end{enumerate}
Moreover, for $T\rightarrow +\infty$, the distribution of the solution converges to an equilibrium density.
\end{theorem}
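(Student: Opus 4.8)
The plan is to establish the conjecture by a compactness-and-limit argument built on the regularized problems already treated in Sections \ref{sec:regular-abstract-formulation} and \ref{sec:simplified-formulation}. First I would fix a regularization parameter $n \in \N$: replace the singular mobility $M(\rho)=1/\rho$ by a smooth, bounded, non-negative approximation $M^n$ (e.g. $M^n(\rho) = 1/(\rho + 1/n)$ capped above), replace the singular potential $V$ by a dissipative $\mathcal{C}^2$ potential $V^n$ as in the proof of Theorem \ref{th:existencesimplified}, and similarly smooth the noise coefficient $Q^n$. For each $n$, by Theorem \ref{th:existenceabstract} (or Theorem \ref{th:existencesimplified} in the additive one-dimensional case) there exists a unique solution $\rho^n$. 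The crucial point is then to derive a priori estimates \emph{uniform in $n$}: from Proposition \ref{th:Proposition} one gets $\E\sup_{t}\|\rho^n(t)\|^2 \le \|\rho_0\|^2 + C_T$ with $C_T$ independent of $n$ (the constants $C_{Q,1},C_{Q,2}$ depending only on the growth of $Q$, not on the regularization), and from the $\alpha>0$ variant one gets in addition a uniform bound on $\int_0^T \|\Delta\rho^n\|^2\,\ddt$, hence a uniform bound in $\sL^2(\Omega;\sL^2([0,T];\sV))$. Combined with a uniform tightness estimate on the time increments (obtained from the equation by bounding $\|A^n(\rho^n)\|_{\sV^*}$ and the martingale part via the Burkholder--Davis--Gundy inequality), this yields tightness of the laws of $(\rho^n, W)$ on $\mathcal{C}([0,T];\sV^*) \cap \sL^2([0,T];\sH)$ by an Aubin--Lions type compactness embedding.

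Next I would apply the Skorokhod representation theorem to obtain, on a new probability space, almost surely convergent copies $\tilde\rho^n \to \tilde\rho$ and $\tilde W^n \to \tilde W$, and then pass to the limit in the weak formulation (h''). The linear terms pass by the $\sV$-weak / $\sH$-strong convergence; the stochastic integral passes by the standard martingale-characterization argument (identifying the limit as a stochastic integral against $\tilde W$ using that $Q^n \to Q$ locally uniformly and the uniform moment bounds). The genuinely delicate term is the nonlinear drift $\nabla(M^n(\rho^n)\nabla\mu^n) = \nabla\big(M^n(\rho^n)V^{n\prime\prime}(\rho^n)\nabla\rho^n\big)$: to identify its limit one needs $\rho^n$ to stay bounded away from zero in the limit. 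Here the gradient/energy structure is essential — as in Proposition \ref{th:Proposition} and the argument of Theorem \ref{th:existencesimplified} leaning on \cite{goudenege, MR2349572}, the singular potential $V(\rho)=\rho+1/\rho$ (or the equivalent $-1/\rho^3$ form of Proposition \ref{prop:equiv}) provides an energy bound $\sup_t \mathcal{E}(\rho^n(t)) \le \mathcal{E}(\rho_0) + (\text{martingale}) + (\text{controlled drift})$ that, when the inverse-power exponent exceeds $3$, forces the limit $\tilde\rho$ to remain strictly positive (the would-be reflection measure vanishes), so that $M(\tilde\rho)$ and $\mu = V'(\tilde\rho)$ are well-defined and the products converge a.e.; a monotonicity/lower-semicontinuity argument (testing against $\tilde\rho^n$ and using $c$-monotonicity of $A^n$ uniformly in $n$) then identifies the weak limit of $A^n(\tilde\rho^n)$ with $A(\tilde\rho)$.

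For the uniqueness and the convergence to equilibrium I would invoke the results already proved: path-wise uniqueness in the sense of Section \ref{sec:uniqueness} (using the dissipativity inequality \eqref{eq:negativity}, which holds for both $p=2,3$, in the $\sV_{-1}$-norm), upgraded to uniqueness in law, so that the limit does not depend on the choice of regularizing sequence. For the long-time behaviour, the approximations $\rho^n$ are gradient systems with unique ergodic Gibbs invariant measures $\Pi^n(\ddx) = (Z^n)^{-1}\exp(-\mathcal{E}^n(x))\gamma(\ddx)$; Proposition \ref{prop:conv_measure} gives $\Pi^n \rightharpoonup \Peq$ with $\Peq(\ddx) = Z^{-1}\exp(-\mathcal{E}(x))\1_{x\in K}\gamma(\ddx)$, and the uniform Strong Feller estimate \eqref{eq:strongfeller} together with irreducibility (control argument, clear in the non-degenerate additive case) transfers to the limit semi-group $\mathcal{P}_t$, yielding uniqueness and convergence of the law of the solution to $\Peq$ as $T\to+\infty$.

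The main obstacle I expect is precisely the identification of the nonlinear drift in the limit, because it rests on two things that are not automatic for the fully general singular problem: (i) a \emph{uniform-in-$n$} energy estimate that survives passage to the limit — the computations after Proposition \ref{th:Proposition} show the $\alpha>0$ energy bound contains the term $\alpha\big|\int_0^T\langle\Delta\rho,\Delta\mu\rangle\,\ddt\big|$ which "is not controlled by integrability" in the current framework, so a genuinely new estimate (or a favourable choice of $Q$) is needed; and (ii) strict positivity of the limit density, which for the multiplicative singular noise $\nabla(\ud W/\sqrt{\rho})$ is a hard question — the fluctuation--dissipation structure suggests the noise vanishes where $\rho\to 0$, but making this rigorous, and ruling out a non-trivial contact set, is exactly the gap that prevents the conjecture from being a theorem. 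This is why the statement is offered as a conjecture rather than proved: the abstract space $\mathcal{H}$ in which all of $\rho$, $V'(\rho)$, $M(\rho)$, $Q(\rho)$ simultaneously live with $\sL^2(\Omega;\cdot)$ control is posited rather than constructed, and identifying such an $\mathcal{H}$ is the crux.
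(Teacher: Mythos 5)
First, a point of honesty about the comparison: the statement you were asked to prove is labelled \textbf{Conjecture} in the paper, and the paper offers no proof of it --- it is stated in the concluding section precisely as the open problem, with the authors sketching the same roadmap you propose (regularization by Yosida/Lipschitz/dissipative approximation, uniform a priori estimates, compactness, passage to the limit). So your outline is faithful to the intended strategy, and you are right to flag that the identification of the nonlinear drift and the strict positivity of the limit are the crux. But your text is a programme, not a proof, and two of its load-bearing claims are contradicted by the paper's own computations.

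The first gap is your assertion that Proposition \ref{th:Proposition} yields $\mathbb{E}\sup_{t}\|\rho^n(t)\|^2 \leq \|\rho_0\|^2 + C_T$ with $C_T$ \emph{independent of} $n$, ``the constants $C_{Q,1},C_{Q,2}$ depending only on the growth of $Q$, not on the regularization.'' For the physically relevant multiplicative noise $Q(\rho)\sim \rho^{-1/2}$, the hypothesis in point (g) of Definition \ref{def:sol}, namely $\mathrm{Tr}(Q^*\Delta Q)\leq C_{Q,1}+C_{Q,2}\|\rho\|^2$, cannot hold with constants uniform over the regularizations $Q^n(\rho)= (\rho+1/n)^{-1/2}$: the sup norm of $Q^n$ near $\rho=0$ grows like $\sqrt{n}$, so $C^n_{Q,1},C^n_{Q,2}$ degenerate unless one already has a lower bound on $\rho^n$ uniform in $n$ --- which is exactly the strict positivity you defer to a later step. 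The argument is therefore circular at its first estimate. The second gap is the uniform energy bound: you claim the singular potential ``provides an energy bound $\sup_t\mathcal{E}(\rho^n(t))\leq\cdots$'' that forces positivity in the limit, but the paper's own discussion after Proposition \ref{th:Proposition} states explicitly that the bound on $\mathbb{E}[\sup_t\mathcal{E}(\rho(t))]$ involves a constant $C(Q,\varepsilon,\rho)$ depending ``on the truncated parameters $\varepsilon$ in unbounded manner,'' hence ``potentially infinite when passing to the limit,'' and that in the $\alpha>0$ case the extra term $\alpha\left|\int_0^T\langle\Delta\rho,\Delta\mu\rangle\,\ddt\right|$ ``is not controlled by integrability condition given in Definition \ref{def:sol}.'' You quote this last obstruction yourself in your final paragraph, which means your second paragraph cannot stand as written: the energy estimate you rely on to rule out the contact set is precisely the estimate the paper shows it does not have. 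Until one of these two estimates is made genuinely uniform (for instance by exploiting a cancellation between the singular mobility and the singular noise via the fluctuation--dissipation structure, or by restricting to the additive-noise, $\alpha>0$, one-dimensional setting where Theorem \ref{th:existencesimplified} already applies), the compactness-and-identification scheme does not close, and the statement remains a conjecture.
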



\begin{thebibliography}{999}



\bibitem{bird2}  Bird R. B., Curtiss C. F., Armstrong R. C.  and Hassager O.: Dynamics of Polymeric Liquids, Vol. 2: Kinetic Theories, J. Wiley \& Sons, New-York (1987).


\bibitem{BLMAWA1}
Bl{\"o}mker D., Maier-Paape S. and Thomas Wanner T.: Spinodal decomposition for the {C}ahn-{H}illiard-{C}ook equation.  Communications in Mathematical Physics, {\bf 223}(3), 553--582 (2001).

\bibitem{BLMAWA2}
Bl{\"o}mker D., Stanislaus Maier-Paape S. and Wanner T.: Phase separation in Stochastic Cahn-Hilliard Models. Nova Science Publishers Mathematical Methods and Models in Phase Transitions, 1--41 (2005).

\bibitem{MR1123143}
Blowey J. F. and Elliott C. M.: The Cahn-Hilliard gradient theory for phase separation with nonsmooth free energy. I. Mathematical analysis. European Journal of Applied Mathematics {\bf 2}(3), 233–280 (1991).


\bibitem{Cahn}
Cahn J. W.: On spinodal decomposition. Acta Metallurgica, {\bf 9}(9):785--801 (1961).


\bibitem{CH1}
Cahn J. W. and Hilliard J. E.: Free Energy of a Nonuniform System. I. Interfacial Free Energy. 
Journal of Chemical Physics {\bf 28}(2), 258-- (1958).

\bibitem{CH2}
Cahn J. W. and Hilliard J. E.: Spinodal Decomposition: a reprise.  Acta Metallurgica {\bf 19}(2), 151--161 (1971).


\bibitem{MR1867082}
Cardon-Weber C.: Cahn-Hilliard stochastic equation: existence of the solution and of its density. Bernoulli {\bf 7}(5), 777–816 (2001).

\bibitem{MR1897915}
Cardon-Weber C.: Cahn-Hilliard stochastic equation: strict positivity of the density. Stochastics and Stochastic Reports {\bf 72}(3-4), 191–227 (2002).

\bibitem{chp1} Ciuperca I. S., Heibig A. and Palade L. I.: Existence and uniqueness results for the Doi-Edwards polymer melt model: the case of the (full) nonlinear configurational probability density equation. Nonlinearity {\bf 25}(4), 991--1009 (2012).


\bibitem{chp2} Ciuperca I. S., Heibig A. and Palade L. I.:  On the IAA version of the Doi–Edwards model versus the K-BKZ rheological model for polymer fluids: A global existence result for shear flows with small initial data.  European Journal of Applied Mathematics {\bf 28}(1), 42--90 (2017).

\bibitem{COOK}
Cook H.: Brownian motion in spinodal decomposition. Acta Metallurgica {\bf 18}, 297--306 (1970).


\bibitem{MR1207136}
Da Prato G. and Zabczyk J.: Stochastic equations in infinite dimensions. Encyclopedia of Mathematics and its Applications, volume 44. Cambridge University Press, Cambridge (1992).

\bibitem{MR1359472}
Da Prato G. and A. Debussche A.: Stochastic Cahn-Hilliard equation. Nonlinear Analysis {\bf 26}(2) 241--263 (1996).

\bibitem{MR1327930}
Debussche A. and Dettori L.: On the Cahn-Hilliard equation with a logarithmic free energy. Nonlinear Analysis {\bf 24}(10), 1491--1514 (1995).

\bibitem{MR2349572}
Debussche A. and Zambotti L.: Conservative stochastic Cahn-Hilliard equation with reflection.  The Annals of Probabilities {\bf 35}(5), 1706--1739 (2007).

\bibitem{debgou} Debussche A. and Goudenège L.: Stochastic Cahn-Hilliard equation with double singular nonlinearities and two reflections.  SIAM Journal on Mathematical Analysis, {\bf 43}(3), 1473--1494 (2011).


\bibitem{dg} de Gennes P. -G.: Scaling Concepts in Polymer Physics. Cornell University Press, Ithaca NJ and London (1979).

\bibitem{de1}  Doi M., Edwards S. F.: The Theory of Polymer Dynamics. Oxford University Press, Oxford (1989).


\bibitem{doob} Doob, J. L.: Stochastic Processes.  J. Wiley \& Sons, New York (1953).

\bibitem{MR1111627}
Elezovic N. and Mikelic A.:   On the stochastic Cahn-Hilliard equation. Nonlinear Analysis {\bf 16}(12), 1169--1200 (1991).

\bibitem{goudenege}  Gouden\`ege L.: Stochastic Cahn-Hilliard equation with singular nonlinearity and reflection.
Stochastic Processes and their Applications, 119(10) :3516–3548 (2009).

\bibitem{GyongyKrylov}
Gyongy I. and Krylov N.:  Existence of strong solutions for Ito's stochastic equations via approximations. Probability Theory Related Fields {\bf 105}(2), 143--158 (1996).


\bibitem{hui1}  Huilgol R. R. and Phan-Thien N., Fluid Mechanics of Viscoelasticity.  Elsevier (1997).

\bibitem{kj1} Kirkwood J. G.: In: Auer, P.L. (ed.) Macromolecules. Gordon and Breach, New York (1968).


\bibitem{LANGER}
Langer J. S.: 	Theory of spinodal decomposition in alloys. Annals of Physics {\bf 65}, 53--86 (1971).

\bibitem{lar}  Larson R. G.: Constitutive Equations for Polymer Melts and Solutions, Butterworths, 1988.

\bibitem{mks}  Macosko C. W.: Rheology: Principles, Measurements, and Applications, Wiley-VCH, 1996.

\bibitem{faith} Morrison F. A.: Understanding Rheology, Oxford University Press, (2001).

\bibitem{MR763473}
Novick-Cohen A. and Segel L. A.: Nonlinear aspects of the Cahn-Hilliard equation. Physica D {\bf 10}(3), 277--298 (1984).


\bibitem{lip1}  Palade L. I.: On slow flows of the full nonlinear Doi-Edwards polymer model. Zeitschrift f\"ur Angewandte Mathematik und Physik ZAMP {\bf 65}, 139--148 (2014).


\bibitem{PeszatZabczyk} Peszat S.  and Zabczyk J.: Strong Feller Property and Irreducibility for Diffusions on Hilbert Spaces.   The Annals of Probability {\bf 23}(1), 155--172 (1995).


\bibitem{PR}
Prévôt C. and Röckner M.: A Concise Course on Stochastic Partial Differential Equations. Lecture Notes in Mathematics No. 1905.  Springer-Verlag Berlin Heidelberg (2007).


\bibitem{jay1} 
Ram\'irez-Hern\'andez A., Peters B. L., Andreev M.,
Jay D. Schieber J. D., de Pablo J. J.: A multichain polymer slip-spring model with fluctuating number of entanglements for linear and nonlinear rheology. The Journal of Chemical Physics {\bf 143}, 243147 (2015). 


\bibitem{jay2} Ram\'irez-Hern\'andez A., Peters B. L., Schneider L., Andreev M., Jay D. Schieber J. D., M\"uller M., de Pablo J. J.: A multi-chain polymer slip-spring model with fluctuating number of entanglements: Density fluctuations, confinement, and phase separation.  The Journal of Chemical Physics {\bf 146}, 014903 (2017). 


\bibitem{jay3} Schieber, J. D.: In a personal communication to L. I. Palade made on August 22nd 2019, Professor Jay D. Schieber, IIT, Chicago,  introduced the mathematical problem and extended the invitation to study it.  The physical theory which he authored is under development.

\bibitem{yhl} 
Yn-Hwang Lin.: Polymer Viscoelasticity: Basics, Molecular Theories and Experiments, 2nd edition. World Scientific, Singapore (2010).

\end{thebibliography}
\end{document}